\newtheorem{theorem}{Theorem}
\newtheorem{lemma}{Lemma}
\newcommand{\ncost}[1]{\ensuremath{\cos(\frac{\theta}{#1})}}
\newcommand{\nsint}[1]{\ensuremath{\sin(\frac{\theta}{#1})}}
\newcommand{\ncosa}[1]{\ensuremath{\cos(\frac{\alpha}{#1})}}
\newcommand{\nsina}[1]{\ensuremath{\sin(\frac{\alpha}{#1})}}
\newcommand{\ck}{\mathcal{C}}
\newcommand{\ckt}{\mathcal{\widehat{C}}}
\newcommand{\M}{\mathcal{M}}
\DeclarePairedDelimiter{\abs}{\lvert}{\rvert}
\title{On Actual Preparation of Dicke State on a Quantum Computer}
\author{\IEEEauthorblockN{
Chandra Sekhar Mukherjee~\IEEEauthorrefmark{1}\IEEEauthorrefmark{3},
Subhamoy Maitra~\IEEEauthorrefmark{1}\IEEEauthorrefmark{4}, 
Vineet Gaurav~\IEEEauthorrefmark{2}\IEEEauthorrefmark{5} and
Dibyendu Roy~\IEEEauthorrefmark{1}\IEEEauthorrefmark{6} \\}
\IEEEauthorblockA{
\IEEEauthorrefmark{1} Indian Statistical Institute, Kolkata,
\IEEEauthorrefmark{2} Indian Institute of Science Education and Research, Mohali \\
\thanks{ 
~\IEEEauthorrefmark{3} chandrasekhar.mukherjee07@gmail.com
~\IEEEauthorrefmark{4} subho@isical.ac.in
~\IEEEauthorrefmark{5} vineet.gaurav1@gmail.com
~\IEEEauthorrefmark{6} roydibyendu.rd@gmail.com}
}}
\begin{document}

\maketitle

\begin{abstract}
The exact number of CNOT and single qubit gates needed to implement 
a Quantum Algorithm in a given architecture is one of the central problems of Quantum Computation. 
In this work we study the importance of concise realizations of Partially defined
Unitary Transformations for better circuit construction using the case study of Dicke State Preparation.
The Dicke States $(\ket{D^n_k})$ are an important class of entangled states with uses in 
many branches of Quantum Information.
In this regard we provide the most efficient Deterministic Dicke State Preparation Circuit
in terms of CNOT and single qubit gate counts in comparison to existing literature.
We further observe that our improvements also reduce architectural constraints of the circuits. 
We implement the circuit for preparing $\ket{D^4_2}$ on the ``ibmqx2'' machine 
of the IBM QX service and observe that the error induced due 
to noise in the system is lesser in comparison to the existing circuit descriptions.
We conclude by describing the CNOT map of the generic $\ket{D^n_k}$ preparation circuit and analyze
different ways of distributing the CNOT gates in the circuit and its affect on the 
induced error.
\end{abstract}

\begin{IEEEkeywords} 
Quantum Computing, Quantum Circuit, Dicke States, IBMQ, CNOT, Noisy Computation.
\end{IEEEkeywords}



\section{Introduction}
\label{sec:1}
One of the most fundamental aspects of Quantum Mechanics is Quantum Computation.
Quantum Computers enable Quantum Algorithms that can perform operations with
even super exponential speed-ups in time over the best known classical algorithms.
Any quantum algorithm can be defined as a series of unitary transformations
and can be implemented as a Quantum Circuit.
A quantum circuit has a discrete set of gates such that their
combinations can express any unitary transformation with any desired accuracy.
Such a set of gates is called a universal set of gates. 
We know from the fundamental work by Barenco et.al~\cite{barenco} 
that single qubit gates and the controlled NOT (CNOT) gate form a 
universal set of gates. We call these gates as elementary gates.

Quantum State Preparation is a topic within Quantum Computation that has garnered 
interest in the past two decades due to applications of special quantum states in
several fields of Quantum Information Theory. 
A $n$-qubit quantum state $\ket{\psi_n}$ can be expressed as the superposition of $2^n$ orthonormal
basis states. In this work we look at $n$ qubit states as super position of the computational basis
states
$\ket{x_1x_2\ldots x_n}, x_i \in\{0,1\},~ 1 \leq i \leq n$.
The basis states in the expression of $\ket{\psi_n}$ with non zero amplitude are
called the active basis states.
Starting from the state $\ket{0}^{\otimes n}$ any arbitrary quantum state
can be formed using $\mathcal{O}(2^n)$ elementary gates, although 
for many $n$ qubit states preparation circuits with polynomial (in $n$)
number of elementary gates is possible. 
The family of Dicke States  $\ket{D^n_k}$ is one such example.
$\ket{D_k^n}$ is the $n$-qubit state 
which is the equal superposition state of all $n \choose w$
basis states of weight $k$. For example 
$\ket{D^3_1}=\frac{1}{\sqrt{3}}(\ket{001}+\ket{010}+\ket{100})$.
Dicke states are an interesting family of states due to the fact that 
they have $n \choose k$ active basis states, which can be exponential in $n$
when $k=\mathcal{O}(n)$ but need only polynomial number of elementary gates 
to prepare.
Dicke states also have applications in the areas of Quantum Game Theory, Quantum 
Networking, among others. 
One can refer to~\cite{dicke} for getting a more in-depth view of these applications. 

There has been several probabilistic and deterministic Dicke state algorithms 
designed in the last two decades \cite{dicke1,dicke2,dicke3}.
In this paper we focus on the algorithm described by B{\"a}rtschi et.al \cite{dicke} 
which gives a deterministic algorithm that takes $\mathcal{O}(kn)$ CNOT gates and 
$\mathcal{O}(n)$ depth to prepare the state $\ket{D^n_k}$. 
To the best of our knowledge this circuit description has the best gate count among the 
deterministic algorithms.
Here it is important to note that the paper by Cruz et.al \cite{dn1} describes two algorithms for
preparing the $\ket{D^n_1}$ states, also known as $W_n$ states. Both the algorithms have better gate
count than the  description by B{\"a}rtschi et.al \cite{dicke} and one of the algorithms has logarithmic
depth. However, their work is restricted to $\ket{D^n_1}$ and has no implication on the circuits for
$\ket{D^n_k},~2 \leq k \leq n-2$. We further observe in Section~\ref{sec:4} that the circuit 
obtained by us after the improvements for $\ket{D^n_1}$ is same as the linear $W_n$ circuit described in \cite{dn1}.

Because of the noisy behavior of current generation Quantum Computers the exact number of 
elementary gates needed and the distribution of the gates over the corresponding circuit become crucial issues
which need to be optimized in order to prepare a state with high fidelity. 
An example of a very recent work done in this area is \cite{aes} which reduces the gate count of AES implementation.
In this regard we discuss the following important problems in the domain of Quantum Circuit Design.

A unitary transformation acting on $n$ qubits can be expressed as a 
$2^n \times 2^n$ unitary matrix and can be decomposed into 
elementary gates in several ways.
Therefore finding the decomposition that needs the least amount of elementary gates 
is a very fundamental problem, with~\cite{song},~\cite{work} being examples of work done in this area. 
It is crucial to minimize the number of gates while decomposing a unitary matrix 
as every gate induces some amount of error into the result.
Especially reducing the number of CNOT gates is of importance due to the well known fact that
it induces more error compared to single qubit gates.

In this work we first describe a fundamental problem that 
decomposition of matrix using a universal set of gates poses. 
Let there be a unitary transformation that is to be performed on
a system of $n$ qubits. This task can be represented as a unitary matrix $U_n$
that works on the Hilbert Space $H_n$ of dimension $2^n$.
If we know the intended transformation for all the states of any 
orthonormal basis of $H_n$, that completely defines the unitary matrix $U_n$.
Let us consider such a transformation for $n=1$.
If the transformation is defined for the two states in the
computational basis $\ket{0}$ and $\ket{1}$ then the 
corresponding unitary matrix is completely defined.
If the transformation is defined as $\ket{0} \rightarrow \frac{1}{\sqrt{2}}(\ket{0}+\ket{1})$
and $\ket{1} \rightarrow \frac{1}{\sqrt{2}}(\ket{0}-\ket{1})$
then the corresponding matrix is the Hadamard matrix, expressed as 
$\begin{bmatrix}
\frac{1}{\sqrt{2}} & \frac{1}{\sqrt{2}}\\
\frac{1}{\sqrt{2}} & -\frac{1}{\sqrt{2}}
\end{bmatrix}$.
However if the transformation is only defined for one state,
$\ket{0} \rightarrow \frac{1}{\sqrt{2}}(\ket{0}+\ket{1})$
and not defined for $\ket{1}$ then there can be uncountably many 
unitary matrices that can perform the said transformation.
Specifically, any matrix of the form
$\begin{bmatrix}
\frac{1}{\sqrt{2}} & \alpha\\
\frac{1}{\sqrt{2}} & -\alpha
\end{bmatrix}$ 
can perform this task, where $\alpha \in \mathbb{C},~ \abs{\alpha}^2=\frac{1}{2}$.

There exists many quantum algorithms where at a step a particular 
transformation on $n$ qubits is defined only for a  
a subset of the states of a orthonormal basis.
This creates the possibility of there being uncountably many 
unitary matrices capable of such a transformation.
The algorithm described in~\cite{dicke} contains such transformations
that are not completely defined for all basis states. 
We call such a transformation a partially defined unitary transformation on $n$ qubits.
There are possibly multiple unitary matrices that can perform this transformation.
In that case it becomes an important problem to find out which candidate unitary matrix
can be decomposed using the minimal number of elementary gates.
 
Furthermore, the number of elementary gates needed to implement a well defined
Quantum Circuit also varies with the architecture of the actual Quantum Computer. 
The architectures of current generation Quantum Computers do not allow for CNOT gates to be 
implemented between any two arbitrary qubits. This CNOT constraint may further increase
the total number of CNOT and single qubit gates needed to implement a Quantum Circuit
on a specific Quantum Architecture. 
Against this backdrop, let us draw out the organization of the 
rest of the paper along with our contributions.

\subsection{Organization and Contribution}
In Section~\ref{sec:2} we first describe the preliminaries
needed to support our work. We first define the concept of 
maximally partial unitary transformation. We then describe the 
the circuit in~\cite{dicke} for preparing Dicke States.
We denote the circuit described in~\cite{dicke} for preparing $\ket{D^n_k}$
as $\ck_{n,k}$.

We start Section~\ref{sec:3} by showing that a transformation
implemented in $\ck_{n,k}$ is in fact a partially defined construction.
We then show that the unitary matrix used to represent the transformation
is not optimal in terms of number of elementary gates needed to decompose 
it. 
We propose a different construction that indeed requires lesser
number of elementary gates and we also argue its optimality w.r.t the Universal gate set. 

In Section~\ref{sec:4} we use the construction to improve the gate count of
the circuit $\ck_{n,k}$ in a generalized manner.
We remove the redundant gates in the circuit and analyze the different
partially defined transformations implemented in the circuit to further reduce the 
gate counts of the circuit. We denote the improved circuit for preparing any Dicke 
State $\ket{D^n_k}$ as $\ckt_{n,k}$. To the best of our knowledge this is the most 
optimal implementation of a deterministic Dicke state preparation circuit for $\ket{D^n_k},~2 \leq k \leq n-2$.

Next in Section~\ref{sec:5} we discuss the architectural constraints posed by
the current generation Quantum Computers that are available for public use 
through different cloud services. We discuss the restrictions in terms of 
implementing CNOT gates between two qubits in an architecture and how it 
increases the number of CNOT gates needed to implement a circuit in an architecture.  
In this regard we show that the improvements described by us in Section~\ref{sec:4}
not only reduces gate counts but also reduces architectural constraints. 

We implement the circuits $\ck_{4,2}$ and $\ckt_{4,2}$ on the IBM-QX machine 
``ibmqx2''\cite{ibmq} and calculate the deviation in each case from ideal measurement statistics
using a simple error measure. 
Next we show how two circuits with the same number of CNOT gates and 
the same architectural restrictions can lead to different expected error 
due to different CNOT distribution across the qubits. We analyze this 
by proposing modifications in the circuit $\ckt_{4,2}$ possible because 
partial nature of certain transformations and how it reduces the number of 
CNOT gates functioning erroneously on expectation in a fairly generalized error model. 
We finish this section by drawing out the general CNOT map of $\ckt_{n,k}$,
shown as the graph $G^{n,k}$ and observing that there in fact exists $n-k-1$ independent 
modifications each leading to a different CNOT distribution. 

We conclude the paper in Section~\ref{sec:6} by describing  
the future direction of work in this domain and also note down
open problems in this area that we feel will improve 
our understanding both in the domains of partially defined transformations 
and architectural constraints. 

\section{Preliminaries}
\label{sec:2}
We first define some terminologies that we frequently use before
moving onto some definitions and the preliminaries.

\subsection{Notations}
\begin{enumerate}
\item $\ket{v_2}$: If we look at a system with $n$ qubits 
then all the $2^n$ orthogonal states in the computational basis
can be expressed as $\ket{b_1b_2\ldots b_n},~ b_i \in \{0,1\}, 1 \leq i \leq n$.

In that case for representing the state $\ket{b_1b_2\ldots b_n}$ 
we treat it as a binary string and express it as $\ket{v_2}$
where $v=\displaystyle \sum_{i=1}^n b_i2^{n-i}$.

\item  $R_y(\theta)$: The $R_y$ gate is a single qubit gate defined as follows. 
$R_y(\theta) \equiv e^{-\theta Y}=
\begin{bmatrix}
\ncost{2} & -\nsint{2} \\
\nsint{2} & \ncost{2}
\end{bmatrix}
$.

\item $X$: This is a single qubit gate defined as 
$X=
\begin{bmatrix}
0 & 1 \\
1 & 0
\end{bmatrix}
$.

\item $CU^i_j$: While implementing a controlled unitary on a two qubit 
subsystem we use the following notations.
Let there be a $n$-qubit system.
$CU^i_j$ represents a two qubit controlled unitary operation 
where the $i$-th qubit is the control qubit and the $j$-th qubit is the target qubit.
\end{enumerate}

\subsection{Maximally Partial Unitary Transformation}
Let there be a unitary transformation that acts on $n$ qubits. 
To perform this transformation we have to create a corresponding 
unitary matrix.
If the transformation is defined for all $2^n$ states of some 
orthonormal basis then the unitary matrix is completely defined.
On the other hand if the transformation is defined for 
a single state belonging to the computational basis, only a single 
column of the corresponding $2^n \times 2^n$ matrix is filled.
The rest can be filled up conveniently, provided its unitary 
property is satisfied. In  this regard we call a unitary transformation
on $n$ qubits to be maximally partial if it is defined for 
$2^n-1$ states of some orthonormal basis. That implies only
a column of the matrix is not defined. 
In this paper we observe how corresponding to a maximally partial unitary
transformation there can be multiple unitary matrices
and how the minimal number of elementary gates needed to implement these matrices
may vary.

We end this section by describing the structure of Dicke states
and a circuit designed for its preparation. 
\subsection{The Dicke State Preparation Circuit $\ck_{n,k}$}
The circuit $\ck_{n,k}$ as described in~\cite{dicke} works on the
$n$ qubit system $\ket{q_1q_2\ldots q_n}$.
The circuit $\ck_{n,k}$ is broken into $n-1$ blocks of the form $SCS^x_y$
of which the first $n-k$ blocks are of the form 
$SCS^{n-t}_k,~n-t>k$ which is then followed by $k-1$ 
blocks of the form $SCS^i_{i-1},k \geq i \geq 2$. 

A block $SCS^n_k$ consists of a two qubit transformation and $k-1$ three qubit 
transformations. The two qubit transformation works on the $n-1$ and $n$-th qubits
and we denote it as $\mu_n$.
We describe the overall structure of the circuit again in Section~\ref{sec:5}. 

The three qubit transformations are of the form $\M_n^l, n-1 \leq i \leq n-k+1$
where $\M_l^n$ works on the qubits $l-1,l$ and $n$.
This construction is interesting in how the transformations $\mu$ and $\M$ 
are partially defined which raises different implementation choices,
with possibly different number of gates needed for elemental decomposition.
We now describe these two transformations for reference.
We denote by $\ket{ab}_x$ the qubits in the $x-1$ and $x$-th position in a system.

\begin{align*}
\mu_n: \quad
& \ket{00}_n \rightarrow \ket{00}_n \nonumber \\
& \ket{11}_n \rightarrow \ket{11}_n \nonumber \\
& \ket{01}_n \rightarrow \sqrt{\frac{1}{n}}\ket{01}_n+\sqrt{\frac{n-1}{n}}\ket{10}_n \nonumber
\end{align*}
\begin{figure}[!ht]
\centering
\Qcircuit @C=1em @R=2em {
& & & & & \quad &\ctrl{1} &\gate{R_y(2 \cos^{-1}{\sqrt{\frac{l}{n}})}} &\ctrl{1} & \qw \\
& & & & & \quad &\targ &\ctrl{-1} &\targ & \qw \\
}
\caption{Implementation of $\mu_n$}
\label{fig:c3}
\end{figure}
\begin{align*}
\M^l_n: \quad
& \ket{00}_l\ket{0}_n \rightarrow \ket{00}_l\ket{0}_n \nonumber \\
& \ket{01}_l\ket{0}_n \rightarrow \ket{01}_l\ket{0}_n \nonumber \\
& \ket{00}_l\ket{1}_n \rightarrow \ket{00}_l\ket{1}_n \nonumber \\
& \ket{11}_l\ket{1}_n \rightarrow \ket{11}_l\ket{1}_n \nonumber \\
& \ket{01}_l\ket{1}_n \rightarrow \sqrt{\frac{n-l+1}{n}}\ket{01}_l\ket{1}_n \\
& ~~~~~~~~~~~~~~~~~~~~~~~+\sqrt{\frac{l-1}{n}}\ket{11}_l\ket{0}_n \nonumber
\end{align*}
\begin{figure}[!ht]
\centering
\Qcircuit @C=1em @R=2em {
& & & & & & \ctrl{2} & \gate{R_y(2\cos^{-1}\sqrt{\frac{n-l+1}{n}})} &\ctrl{2} & \qw \\
& & & & & & \qw      & \ctrl{-1}                                & \qw     & \qw  \\
& & & & & & \targ    & \ctrl{-2}                                &\targ    & \qw \\
}
\caption{Implementation of $\M^l_n$}
\label{fig:c4}
\end{figure}

The implementations of these transformations in~\cite{dicke} is shown in
Figure~\ref{fig:c3} and~\ref{fig:c4} respectively.
The first transformation, $\mu_n$ is in fact a maximally partial unitary transform.
Because of the partially defined nature of the transformation
the $CR_y$ and $CCR_y$ gates are also not fed all possible 
inputs. 
Instead the input to the $CR_y$ gates is only from the 
subspace spanned by the computational basis states $\ket{00},\ket{10}$ and $\ket{01}$.
Similarly the input to the  $CCR_y$ gate is only from the subspace
spanned by the states $\ket{000}, \ket{010}, \ket{001}, \ket{011},\text{ and } \ket{110}$.

Next in Section~\ref{sec:3} we look how partially defined transformations
can be implemented more efficiently, and argue the optimality of this
improvement with respect to this particular building block. 
Then in Section~\ref{sec:4} we reduce the gate count of the circuit 
$\ck_{n,k}$ by removing redundancies and analyzing how the $\mu$ and $\M$
transformations act only on a subset of the defined computational basis 
states in specific cases. 

\section{Example of Optimality for a Maximally Partial Unitary Transformation}
\label{sec:3}
We have described the two partially defined unitary transformations used in the
circuit $\ck_{n,k}$.
The implementation of the first transformation, $\mu_n$ is done using a controlled $R_y$
gate and two CNOT gates.
This $CR_y$ gate only acts on the states
$\ket{00}, \ket{10}, \ket{01}$ and their superpositions and 
the transformation never acts on the $\ket{11}$ state.
If we take $\theta=2\cos^{-1}\big(\sqrt{\frac{1}{n}}\big)$.
We denote the transformation implemented by the $CR_y(\theta)$ gate
on the defined basis states as $T_1(\theta)$,
and the corresponding transformation is as follows:
\begin{align}
\label{eq:tr1}
T_1(\theta): \quad
&\ket{00} \rightarrow \ket{00} \\ \nonumber
&\ket{10} \rightarrow \ket{10} \\ \nonumber
&\ket{01} \rightarrow   \big( \cos(\frac{\theta}{2})\ket{0}+ \sin(\frac{\theta}{2})\ket{1} \big ) \ket{1} \nonumber
\end{align}
This is in fact a maximally defined partial unitary transformation.
While the gate $CR_Y(\theta)$ can perform this transformation,
it needs at least $4$ elementary gates to implement. We first
prove this necessary requirement using an important result from 
\cite[Theorem B]{song}, which we note down for reference.

\begin{theorem}\cite{song}
\label{th:opt}
\begin{enumerate}
\item For a controlled gate $CU$ if
$tr (UX)=0,~ tr(U) \neq 0,~ \det U=1,~ U \neq \pm I$ then 
the minimal number of elementary gates needed to implement 
$CU$ is $4$.

\item For a controlled gate $CU$ if
$tr(U)=0,~ \det U=-1, ~ U \neq \pm X$ then 
the minimal number of elementary gates needed to implement 
$CU$ is $3$.

\item For a controlled gate $CU$ the minimal 
number of number of elementary gates needed to implement 
$CU$ is less than three $3$ iff $U \in \{e^{i\phi}I,e^{i\phi}X,e^{i\phi}Z \},~0 \leq \phi \leq 2\pi$.
\end{enumerate}
\end{theorem}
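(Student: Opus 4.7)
The strategy is to establish each of the three parts as a matching pair of upper and lower bounds on the elementary-gate count of $CU$, using the Cartan/KAK decomposition of $SU(4)$ together with the Makhlin local invariants $G_1(V), G_2(V)$ that classify any two-qubit unitary up to pre- and post-composition by single-qubit gates. First, I would invoke the standard factorisation $V = (A_1 \otimes A_2)\,\exp\!\bigl(i(c_1 X\!\otimes\! X + c_2 Y\!\otimes\! Y + c_3 Z\!\otimes\! Z)\bigr)\,(B_1 \otimes B_2)$ and recall the well-known fact (Vidal--Dawson, Vatan--Williams) that the minimum CNOT count of $V$ is read off directly from the Cartan triple $(c_1,c_2,c_3)$.

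Next, I would translate the trace and determinant hypotheses on $U$ into algebraic conditions on the Cartan data of $CU$. Expanding $U = aI + b_xX + b_yY + b_zZ$ in the Pauli basis gives $\operatorname{tr}U = 2a$, $\operatorname{tr}(UX) = 2b_x$ and $\det U = a^2 - b_x^2 - b_y^2 - b_z^2$, so the three hypotheses cut the sphere $|a|^2 + b_x^2 + b_y^2 + b_z^2 = 1$ into three disjoint loci. I expect part~(3) to correspond to the three ``cheap'' local equivalence classes of $CU$, namely the orbits of $I \otimes I$, $CX$ and $CZ$; part~(2) to the next orbit up the hierarchy, that of a controlled non-$X$ involution; and part~(1) to the generic $CR_y$-type orbit, where the Pauli component of $U$ is orthogonal to $X$ and $U$ is a nontrivial rotation in the $Y$-$Z$ plane.

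For the upper bounds I would write down short circuits matching each claimed count: the classical pattern $\text{CNOT}\,(I \otimes R_y(-\theta/2))\,\text{CNOT}\,(I \otimes R_y(\theta/2))$ realises a generic $CR_y$ in four elementary gates for part~(1); a three-gate conjugation of the form $(I \otimes S_1)\,\text{CNOT}\,(I \otimes S_2)$ handles part~(2); and $C(e^{i\phi}I)$, $C(e^{i\phi}X)$, $C(e^{i\phi}Z)$ each admit a direct sub-three-gate construction (a phase gate on the control, a CNOT up to a phase, and a CNOT conjugated by a fixed single-qubit rotation respectively). The principal obstacle, which is where the real work lies, is proving the matching lower bounds. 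My approach here is to take a hypothetical short circuit, absorb successive single-qubit layers into the local factors of its KAK normal form, and show that the Cartan coordinates reachable with strictly fewer CNOTs, or with the claimed CNOT count but fewer surrounding single-qubit gates, cannot lie on the orbit forced by the theorem's algebraic hypotheses. The delicate step is distinguishing three-gate from four-gate implementations in part~(1) and two-gate from three-gate implementations in part~(2), because the Makhlin invariants alone see only the CNOT count and a finer case analysis over the possible placements of single-qubit gates around a given CNOT layer is needed to complete the argument.
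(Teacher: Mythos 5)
You are attempting to prove a statement that the paper itself does not prove: Theorem~\ref{th:opt} is imported verbatim as Theorem~B of Song and Klappenecker \cite{song} and used as a black box, so there is no in-paper argument to compare against. Taken on its own terms, your strategy (KAK normal form and Makhlin invariants for CNOT-count lower bounds, explicit circuits for upper bounds, then a case analysis over single-qubit gate placements) is a sensible modern route, but the sketch defers exactly the step on which the theorem lives. The Cartan/Makhlin machinery only bounds the number of CNOTs, whereas the theorem counts CNOTs and single-qubit gates together. For part (1) the invariants force two CNOTs, but the claim is a minimum of \emph{four total gates}: you must still exclude a three-gate circuit consisting of two CNOTs and one single-qubit gate (in any of the finitely many placements and orientations), and nothing in the proposal does this --- you name it as ``where the real work lies'' and stop. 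Until that enumeration is carried out, only the upper bounds are established.

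There is also a concrete error in your upper-bound paragraph that points at a deeper problem with the statement as transcribed. Your ``sub-three-gate'' construction for $C(e^{i\phi}Z)$ --- a CNOT conjugated by a fixed single-qubit rotation --- uses three elementary gates, not fewer, and in fact $C(e^{i\phi}Z)$ cannot be realized with fewer than three: it is entangling, so at least one CNOT is required, and composing it with a single CNOT (on either side, in either orientation) leaves a gate of the form $\mathrm{diag}(I,W)$ or its reflected analogue with $W$ not proportional to the identity, which is still entangling and hence cannot be cancelled by one remaining single-qubit gate; two CNOTs alone yield only $0$--$1$ matrices. This is not merely a flaw in your circuit: the statement is internally inconsistent at this point, since $U=Z$ satisfies the hypotheses of part (2) ($\operatorname{tr}U=0$, $\det U=-1$, $U\neq\pm X$), which pins the minimum at exactly three, while part (3) would place it strictly below three. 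The paper only ever invokes part (3) in the contrapositive ($U$ outside the listed set needs at least three gates), so its applications are unaffected, but a correct blind proof must first repair the statement (drop $e^{i\phi}Z$ from part (3), or read it as ``at most three'') rather than attempt to establish it as written.
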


Our lemma follows immediately.
\begin{lemma}\label{th:cry4}
It takes minimum $4$ elementary gates to implement the $CR_y(\theta)$ gate.
\end{lemma}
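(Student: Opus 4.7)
The plan is to apply Theorem~\ref{th:opt} part~(1) directly to $U = R_y(\theta)$ with $\theta = 2\cos^{-1}(\sqrt{1/n})$, verifying each of its four hypotheses in turn. Since the theorem gives a lower bound of $4$ elementary gates whenever those hypotheses hold, establishing the lemma reduces to a short check on the $2 \times 2$ matrix
\begin{equation*}
R_y(\theta) = \begin{bmatrix} \cos(\theta/2) & -\sin(\theta/2) \\ \sin(\theta/2) & \cos(\theta/2) \end{bmatrix}.
\end{equation*}

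First I would compute $\det R_y(\theta) = \cos^2(\theta/2) + \sin^2(\theta/2) = 1$, which is immediate. Next I would compute the trace $\mathrm{tr}\, R_y(\theta) = 2\cos(\theta/2) = 2\sqrt{1/n}$, which is nonzero for every $n \geq 1$ relevant to $\ck_{n,k}$, so $\mathrm{tr}(U) \neq 0$. Third, I would compute $R_y(\theta)X$, whose diagonal entries are $-\sin(\theta/2)$ and $\sin(\theta/2)$, giving $\mathrm{tr}(UX) = 0$. Finally, I would note that $R_y(\theta) \neq \pm I$ because $\sin(\theta/2) = \sqrt{(n-1)/n} \neq 0$ for the values of $n$ that arise in the construction. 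All four hypotheses of Theorem~\ref{th:opt}(1) are therefore satisfied, and the lower bound of $4$ follows.

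For completeness I would also observe that parts~(2) and~(3) of Theorem~\ref{th:opt} are not applicable and therefore cannot yield a smaller lower bound: part~(2) requires $\mathrm{tr}(U)=0$, which fails as shown above, while part~(3) requires $U$ to be a global phase multiple of $I$, $X$, or $Z$, which $R_y(\theta)$ is clearly not for the chosen $\theta$. Combined with the fact that the circuit of Figure~\ref{fig:c3} already achieves $4$ elementary gates, this establishes the bound as tight, but the lemma only asserts the necessity direction.

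The proof is essentially a verification, so there is no serious obstacle; the only subtle point is to be careful that the specific value $\theta = 2\cos^{-1}(\sqrt{1/n})$ really falls outside the degenerate cases excluded in the hypotheses of Theorem~\ref{th:opt}(1), which is why I would write out the explicit nonzero values of $\cos(\theta/2)$ and $\sin(\theta/2)$ rather than appeal to genericity.
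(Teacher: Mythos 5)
Your proposal is correct and follows essentially the same route as the paper: both verify the hypotheses of Theorem~\ref{th:opt}(1) for $U = R_y(\theta)$ by computing $\det R_y(\theta) = 1$ and $\mathrm{tr}(R_y(\theta)X) = 0$. You are in fact slightly more careful than the paper, which leaves the nondegeneracy conditions $\mathrm{tr}(U) \neq 0$ and $U \neq \pm I$ implicit, whereas you check them explicitly for the relevant value $\theta = 2\cos^{-1}(\sqrt{1/n})$.
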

\begin{proof}
We calculate the values of $\det R_y(\theta)$ and $tr(R_y(\theta)X)$
to confirm the minimal number of gates needed to decompose $CR_y(\theta)$.
\begin{align*}
& \det R_y(\theta)= \sin^2(\frac{\theta}{2})+\cos^2(\frac{\theta}{2})=1 \\
& R_y(\theta)X=
\begin{bmatrix}
-\nsint{2} & \ncost{2} \\
 \ncost{2} & \nsint{2} 
\end{bmatrix} \implies tr(R_y(\theta)X)=0 
\end{align*}
The result ($1$) of Theorem~\ref{th:opt} concludes the proof.
\end{proof}
However the transformation $T_1(\theta)$ can in fact be implemented using
three elementary gates as follows.
$$T_1(\theta) \equiv \Big( R_y( \frac{-\alpha}{2}) \otimes I_2 \Big) {\sf CNOT}^2_1
 \Big( R_y( \frac{\alpha}{2}) \otimes I_2 \Big),~ \frac{\alpha}{2}=\frac{\pi}{2}-\frac{\theta}{2} $$
This decomposition has also been used by Cruz et.al~\cite{dn1} in the $W_n$ ($D^n_1$) 
state preparation algorithm. However, the corresponding transformation is defined only for the states 
$\ket{00}$ and $\ket{01}$ and no insight into the optimality of the implementation is given.

We first derive the underlying $4 \times 4$ unitary matrix $U^0(\alpha)$ that describes this three gate
transformation. Next we prove that $U^0(\alpha)$ needs at least three gates to be implemented 
by verifying the conditions of result (2) of Theorem~\ref{th:opt}.
We end this section by showing that the transformation $T_1(\theta)$ needs at least 
three elementary gates (including one CNOT) to be implemented, proving the optimality 
of the $U^0(\alpha)$ implementation.

\begin{figure}[!ht]
\centering
\Qcircuit @C=1em @R=2em {
&&&& & \qw &\targ &\gate{Ry(\frac{-\theta}{2})} &\targ &\gate{Ry(\frac{\theta}{2})} & \qw \\
&&&& & \qw &\ctrl{-1} & \qw &\ctrl{-1} & \qw & \qw  
}
\caption{Implementation of $CR_y(\theta)$}
\label{fig:c0}
\end{figure}

\begin{figure}[!ht]
\centering
\Qcircuit @C=1em @R=2em {
&&&& & \qw &\gate{Ry(\frac{\alpha}{2})} &\targ &\gate{Ry(\frac{-\alpha}{2})} & \qw & \qw  \\
&&&& & \qw & \qw &\ctrl{-1} & \qw & \qw & \qw\\
}
\caption{Implementation of $U^o(\alpha)$}
\label{fig:c1}
\end{figure}

\begin{theorem}
The gate $U^0(\alpha)$ performs the partially defined unitary transformation $T_1(\theta)$
where $\alpha=\pi-\theta$
and needs minimum three elementary gates to be implemented.
\end{theorem}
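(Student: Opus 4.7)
The plan is to split the statement into its two assertions and attack them separately, since the first is a direct computational verification while the second requires wrapping the claim into the controlled-$U$ framework of Theorem~\ref{th:opt}.

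For the first assertion, I would propagate each of the three defined input states $\ket{00}, \ket{10}, \ket{01}$ through the three gates of Fig.~\ref{fig:c1} in order: $R_y(\alpha/2) \otimes I$, then the CNOT controlled by qubit~2 targeting qubit~1, then $R_y(-\alpha/2) \otimes I$. For the inputs $\ket{00}$ and $\ket{10}$ the CNOT is inert (second qubit in $\ket{0}$), so the two $R_y$ rotations cancel and the states are returned unchanged. For the input $\ket{01}$, the CNOT flips the first qubit after the first rotation, and collapsing the resulting expression via the double-angle identities $2\sin(\alpha/4)\cos(\alpha/4) = \sin(\alpha/2)$ and $\cos^{2}(\alpha/4) - \sin^{2}(\alpha/4) = \cos(\alpha/2)$ yields $(\sin(\alpha/2)\ket{0} + \cos(\alpha/2)\ket{1})\ket{1}$. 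Substituting $\alpha = \pi - \theta$ turns $\sin(\alpha/2)$ into $\cos(\theta/2)$ and $\cos(\alpha/2)$ into $\sin(\theta/2)$, matching the definition of $T_1(\theta)$ in Equation~\eqref{eq:tr1}.

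For the second assertion, the upper bound of three elementary gates is witnessed directly by the decomposition in Fig.~\ref{fig:c1}, so the substance is the matching lower bound. The key observation is that $U^0(\alpha)$ is itself a controlled single-qubit gate: conditioned on the second qubit being $\ket{0}$ the CNOT is inert and the two $R_y$ rotations multiply to the identity, while conditioned on the second qubit being $\ket{1}$ the first qubit undergoes $U := R_y(-\alpha/2)\, X\, R_y(\alpha/2)$. A direct matrix multiplication gives
\[
U \;=\; \begin{bmatrix} \sin(\alpha/2) & \cos(\alpha/2) \\ \cos(\alpha/2) & -\sin(\alpha/2) \end{bmatrix},
\]
from which $\mathrm{tr}(U) = 0$, $\det(U) = -1$, and $U \neq \pm X$ whenever $\sin(\alpha/2) \neq 0$, which holds on the range of $\theta$ arising from the Dicke state context (so $\alpha = \pi - \theta$ is bounded away from the degenerate values $0, \pi$). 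These are exactly the hypotheses of part~(2) of Theorem~\ref{th:opt}, yielding the lower bound of three elementary gates.

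The only potential obstacle I foresee is bookkeeping rather than mathematical: I must be careful to match the control/target convention used in Theorem~\ref{th:opt} with the one implicit in Fig.~\ref{fig:c1} (the control sits on the second qubit), and to record the degenerate boundary case $\alpha \in \{0, \pi\}$ where $U$ collapses to $\pm X$ and the lower bound of part~(2) would no longer apply. Once the controlled-$U$ identification is made explicit, the lower bound is an immediate citation, and combining it with the explicit three-gate circuit proves minimality.
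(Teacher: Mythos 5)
Your proposal is correct and follows essentially the same route as the paper: verify the action on the defined basis states, identify $U^0(\alpha)$ as a controlled gate with $U=\begin{bmatrix}\sin(\alpha/2) & \cos(\alpha/2)\\ \cos(\alpha/2) & -\sin(\alpha/2)\end{bmatrix}$, and invoke part (2) of Theorem~\ref{th:opt} via $\mathrm{tr}(U)=0$, $\det(U)=-1$. The only cosmetic difference is that you obtain $U$ by the direct product $R_y(-\alpha/2)\,X\,R_y(\alpha/2)$ rather than by propagating $\ket{11}$ through the circuit, and you explicitly record the $U\neq\pm X$ hypothesis (i.e.\ $\sin(\alpha/2)\neq 0$), which the paper leaves implicit.
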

\begin{proof}
We first study the transformation carried out by $U^0$ in the subspace of $T_1$.
\begin{align*}
\ket{00} & \\
&\xrightarrow{R_y \left( \frac{\alpha}{2}\right)\text{ on }q1}
\Big ( \ncosa{4}\ket{0}+\nsina{4}\ket{1} \Big ) \ket{0}  \\
&\xrightarrow{{\sf CNOT}_1^2}
\Big ( \ncosa{4}\ket{0}+\nsina{4}\ket{1} \Big ) \ket{0} \\
&\xrightarrow{R_y(-\frac{\alpha}{2})\text{ on }q1}
\Big( \ncosa{4}\big(\ncosa{4}\ket{0}-\nsina{4})\ket{1}\big)+\\
&~~~~~~~~~~~~~~~~~\nsina{4})\big(\nsina{4})\ket{0}+\ncosa{4})\ket{1} \big) \Big) \ket{0} \\
&= \big( \cos^2(\frac{\alpha}{4})+\sin^2(\frac{\alpha}{4}) \big)\ket{00}=\ket{00} \\
\ket{10} &\\
& \xrightarrow{R_y \left( \frac{\alpha}{2}\right)\text{ on }q1}
\Big ( -\nsina{4}\ket{0}+\ncosa{4}\ket{1} \Big ) \ket{0}  \\
& \xrightarrow{{\sf CNOT}_1^2} 
\Big ( -\nsina{4}\ket{0}+\ncosa{4}\ket{1} \Big ) \ket{0} \\
& \xrightarrow{R_y(-\frac{\alpha}{2})\text{ on }q1} \Big(
-\nsina{4}\big(\ncosa{4}\ket{0}-\nsina{4})\ket{1}\big)+ \\
&~~~~~~~~~~~~~~~~~\ncosa{4})\big(\nsina{4})\ket{0}+\ncosa{4})\ket{1} \big) \Big) \ket{0} \\
&= \big( \cos^2(\frac{\alpha}{4})+\sin^2(\frac{\alpha}{4}) \big)\ket{10}=\ket{10} \\
 \ket{01} & \\
& \xrightarrow{R_y \left( \frac{\alpha}{2}\right)\text{ on }q1}
\Big ( \ncosa{4}\ket{0}+\nsina{4}\ket{1} \Big ) \ket{1}  \\
& \xrightarrow{{\sf CNOT}_1^2} 
\Big ( \nsina{4}\ket{0}+\ncosa{4}\ket{1} \Big ) \ket{1} \\
& \xrightarrow{R_y(-\frac{\alpha}{2})\text{ on }q1} \Big(
\nsina{4}\big(\ncosa{4}\ket{0}-\nsina{4})\ket{1}\big)+ \\
&~~~~~~~~~~~~~~~~~\ncosa{4})\big(\nsina{4})\ket{0}+\ncosa{4})\ket{1} \big) \Big) \ket{1} \\
&= \big( 2\ncosa{4}\nsina{4} \ket{0}+ (\cos^2(\frac{\alpha}{4})-\sin^2(\frac{\alpha}{4})\ket{1} \big)\ket{1} \\
&=\Big( \nsina{2}\ket{0}+\ncosa{2}\ket{1} \Big)\ket{1}
\end{align*}

Setting $\alpha=\pi-\theta$ gives us the same transformation as defined by $T_1(\theta)$.

Now we completely define the gate $U^0$ by studying the transformation
acted on the state $\ket{11}$.
\begin{align*}
 \ket{11} &\\
&\xrightarrow{R_y \left( \frac{\alpha}{2}\right)\text{ on }q1} 
\Big ( -\nsina{4}\ket{0}+\ncosa{4}\ket{1} \Big ) \ket{0} \\  
& \xrightarrow{{\sf CNOT}_1^2} \Big ( \ncosa{4}\ket{0}-\nsina{4}\ket{1} \Big ) \ket{0} \\
& \xrightarrow{R_y(-\frac{\alpha}{2})\text{ on }q1} \Big(
\ncosa{4}\big(\ncosa{4}\ket{0}-\nsina{4})\ket{1}\big)- \\
&~~~~~~~~~~~~~~~~~ \nsina{4})\big(\nsina{4})\ket{0}+\ncosa{4})\ket{1} \big) \Big) \ket{0} \\
&= \big( (\cos^2(\frac{\alpha}{4})-\sin^2(\frac{\alpha}{4})\ket{0}- 2\ncosa{4}\nsina{4} \ket{0}  \big)\ket{1} \\
&=\Big( \ncosa{2}\ket{0}-\nsina{2}\ket{1} \Big)\ket{1}
\end{align*}
So the overall transformation provided by $U^0(\alpha)$ is:
\begin{align*}
& \ket{00} \rightarrow \ket{00} \\
& \ket{10} \rightarrow \ket{10} \\
& \ket{01} \rightarrow \Big( \nsina{2}\ket{0}+\ncosa{2}\ket{1} \Big)\ket{1} \\
& \ket{11} \rightarrow \Big( \ncosa{2}\ket{0}-\nsina{2}\ket{1} \Big)\ket{1}
\end{align*}
Therefore the gate $U^0(\alpha)$ is a two qubit gate which can be expressed as 
a controlled gate $CU(\alpha)$ gate where $U(\alpha)=
\begin{bmatrix}
\nsina{2} & \ncosa{2}  \\
\ncosa{2} & -\nsina{2} \\
\end{bmatrix}
$.
Now $tr U(\alpha)=0$ and $\det U(\alpha)=-1$ for all $\alpha$.
Therefore we can conclude from the result ($2$) in Theorem~\ref{th:opt} that this gate requires at least
three gates to be implemented.
\end{proof}
We finally show the optimality of this implementation for implementing the 
two qubit transformation $T_1(\theta)$.
\begin{lemma}
The transformation $T_1(\theta)$ needs at least one CNOT
and two single qubit gates to be implemented for $0 <\theta <\pi$.
\end{lemma}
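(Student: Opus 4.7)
The plan is to prove two independent lower bounds on any circuit realising $T_1(\theta)$ for $0<\theta<\pi$: at least one CNOT is required, and at least two single qubit gates are required. Since $T_1(\theta)$ is only partially defined (the image of $\ket{11}$ is free), Theorem~\ref{th:opt} does not apply to a single fixed extension, so structural arguments about the circuit's gate layout are required. The CNOT bound is the easier half: if the circuit uses no CNOT, its unitary factorises as $A\otimes B$, and the images $\ket{00}\mapsto\ket{00}$ and $\ket{10}\mapsto\ket{10}$ force $A\ket{0}\propto\ket{0}$, $B\ket{0}\propto\ket{0}$, and $A\ket{1}\propto\ket{1}$. Then $(A\otimes B)\ket{01}=\ket{0}\otimes B\ket{1}$ has $\ket{0}$ on the first qubit, contradicting the nonzero $\sin(\theta/2)\ket{11}$ component of $T_1(\theta)\ket{01}$ for $\theta\in(0,\pi)$.

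For the single qubit lower bound I would exploit that products of CNOTs on two qubits form a finite group of basis permutations fixing $\ket{00}$ (isomorphic to $GL_2(\mathbb{F}_2)\cong S_3$ acting on $\{\ket{01},\ket{10},\ket{11}\}$), so CNOTs alone cannot create superpositions. A circuit with zero single qubit gates is thus a pure basis permutation and cannot send $\ket{01}$ to a nontrivial superposition. With exactly one single qubit gate the circuit has the form $P_L(G\otimes I)P_R$ or $P_L(I\otimes G)P_R$, where $P_L,P_R$ are CNOT-permutations fixing $\ket{00}$. In the first case, writing $G\ket{0}=a\ket{0}+b\ket{1}$, the image of $\ket{00}$ is $a\ket{00}+bP_L\ket{10}$, and since $P_L$ is injective with $P_L\ket{00}=\ket{00}$ we have $P_L\ket{10}\neq\ket{00}$, so this equals $\ket{00}$ only when $a=1$ and $b=0$. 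Hence $G$ is diagonal, $(G\otimes I)$ is diagonal in the computational basis, and the whole circuit becomes a signed basis permutation, unable to produce the superposition required on $\ket{01}$. The $(I\otimes G)$ case is handled by an essentially identical argument replacing $P_L\ket{10}$ with $P_L\ket{01}$.

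The main obstacle is handling the single qubit gate surrounded by arbitrarily long CNOT strings, where in principle the CNOTs could interact with $G$ in complicated ways. The key structural simplification is the observation that every product of CNOTs is a basis permutation fixing $\ket{00}$, so sandwiching one single-qubit gate between two such permutations can only relocate superposition mass between fixed pairs of basis states; the constraint $\ket{00}\mapsto\ket{00}$ then pins $G$ to diagonal form and exhausts all superposition-creating budget. Combining the two lower bounds yields a minimum of one CNOT plus two single qubit gates, matching the $U^0(\alpha)$ construction from the preceding theorem and establishing its optimality for realising $T_1(\theta)$.
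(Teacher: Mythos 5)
Your proof is correct, but it takes a genuinely different route from the paper's. The paper first shows that \emph{every} unitary completion of $T_1(\theta)$ is forced by unitarity to be a controlled gate $CM_1(\theta)$ whose target unitary has first column $\big(\cos(\frac{\theta}{2}),\sin(\frac{\theta}{2})\big)^T$ (orthonormality of the rows kills the entries $a$ and $c$ of the undetermined fourth column), and then invokes result (3) of Theorem~\ref{th:opt}: since both entries of that column are nonzero for $0<\theta<\pi$, the target cannot lie in $\{e^{i\phi}I,e^{i\phi}X,e^{i\phi}Z\}$, so at least three elementary gates are needed. You instead argue directly about circuit structure: a CNOT-free circuit factorises as $A\otimes B$ and cannot produce the $\ket{11}$ component of $T_1(\theta)\ket{01}$, while the observation that words in the two CNOT orientations act as basis permutations fixing $\ket{00}$ (the group $GL_2(\mathbb{F}_2)\cong S_3$) lets you rule out zero or one single-qubit gates, the constraint $\ket{00}\mapsto\ket{00}$ pinning the lone gate $G$ to diagonal form. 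Both arguments are sound. Yours is self-contained (no appeal to the Song--Klappenecker classification) and delivers exactly the breakdown the lemma asserts --- at least one CNOT \emph{and} at least two single-qubit gates --- whereas the paper's argument literally yields ``at least three elementary gates'' and leaves that split implicit. What the paper's route buys is brevity and the structurally useful byproduct that every completion of $T_1(\theta)$ is a controlled unitary, which ties into its surrounding discussion of partially defined transformations.
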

\begin{proof}
The transformation $T_1(\theta)$ is only defined for the basis states $\ket{00},\ket{01}$ and
$\ket{10}$. 
Any matrix $M(\theta)$ that can carry out the transformation is of the form
$\begin{bmatrix}
1 & 0 & 0 & a \\ 
0 & \cos(\frac{\theta}{2}) & 0 & b \\
0 & 0 & 1 & c \\
0 & \sin(\frac{\theta}{2}) & 0 & d
\end{bmatrix}$
where $a,b,c,d$ are complex unknowns.
However since $M(\theta)$ is unitary we have $M(\theta)M^{\dagger}(\theta)=I$.
Therefore $1+aa^*=1 \implies a=0$ and $1+cc^*=1 \implies c=0$.
That is the matrix $M_{\theta}$ is a controlled unitary $CM_1(\theta)$
and $M_1(\theta)=
\begin{bmatrix}
\cos(\frac{\theta}{2}) & b \\
\sin(\frac{\theta}{2}) & d
\end{bmatrix}$.

Now for $0 < \theta < \pi$ both $\cos(\frac{\theta}{2})$ and $\sin(\frac{\theta}{2})$
are non zero. This implies that the matrix cannot fulfill the necessary conditions 
defined in result (3) of Theorem~\ref{th:opt} and therefore cannot be expressed with
less than three elementary gates. 
\end{proof}

Now we use our observations to improve the gate count of the circuit $\ck_{n,k}$.
\section{Improved gate counts for Circuits of $\ket{D^n_k}$}
\label{sec:4} 

We first count the number of CNOT and single qubit gates
in $\ck_{n,k}$ by reviewing the circuit. 
The circuit is composed of $n-1$ blocks of gates called $SCS$.
There are $n-k-1$ blocks of the form $SCS^t_k,~ k<t \leq n$
and $k-1$ blocks of the form $SCS^{i+1}_i,~1 \leq i \leq k-1$.

Each block $SCS^t_k$ consists of one two qubit transformation $\mu_t$ 
which is implemented on the qubits $t-1$ and $t$
and $k-1$ three qubit transformations of the type $\M^l_t,~t-1 \leq l \leq t-k-2$.
Here $\mu_t$ is implemented on the $t-1$ and $t$-th qubit and $\M^l_t$ is implemented on the
$l-1$, $l$ and $t$-th qubit, as described in Section~\ref{sec:2}. 
Each transformation of type $\mu$ is decomposed into two CNOT and a $T_1(\theta)$
transformation which is implemented as a $CR_y$ gate by adjusting the value of $\theta$.
We have shown in Lemma~\ref{th:cry4} that a $CR_y$ transformation needs minimum
$4$ gates to implement. In fact it needs at least two CNOT gates.
Therefore each $\mu$ transformation needs four CNOT and two single
qubit gate.
The number of transformations of type $\M^l_n$ is 
\begin{align*}
&(n-k)(k-1)+ \displaystyle \sum_{i=1}^{k-2}i \\
&=nk-n+k-k^2 + \frac{(k-1)(k-2)}{2} \\
&=nk- \frac{k(k+1)}{2}-n+1.
\end{align*}
Each $\M^l_n$ transformation is shown to require six CNOT and four single qubit gates.
However one CNOT gate of for each $\M^l_n$ transformation can be canceled by rearranging the 
first two CNOT gates of the next transformation.

The total number of CNOT gates and single qubit gates used to prepare
the state $\ket{D^n_k}$ is shown in Table~\ref{tab:cnot1}.

\begin{table}[!ht]
\begin{center}
\begin{tabular}{| c | c |}
\hline
CNOT gates & $5(nk- \frac{k(k+1)}{2}-n+1)+4(n-1)$ \\ \hline
single qubit gates & $4(nk- \frac{k(k+1)}{2}-n+1)+2(n-1)$ \\ \hline
\end{tabular}
\caption{Gates needed to prepare $\ket{D^n_k}$ as in~\cite{dicke} }
\label{tab:cnot1}
\end{center}
\end{table}

Figure~\ref{fig:cir63_0} shows the circuit $\ck_{6,3}$ in terms of CNOT, $CR_y$ and $CCR_y$ gates. 

\begin{figure*}[!ht]
\begin{center}
\scriptsize
\vfill
\hspace*{0.5cm}
\Qcircuit @C=0.27em @R=0.1em {
\lstick{\ket{0}}& \qw& \qw \qw &\qw &\qw&\qw &\qw&\qw &\qw&\qw &\qw&\qw &\qw&\qw &\qw&\qw &\qw&\qw&\qw &\qw&\qw &\qw&\qw &\qw&\qw&\qw&\qw&\ctrl{3}&\gate{\sqrt{\frac{3}{4}}}&\ctrl{3}&\qw&\qw&\qw&\ctrl{2}&\gate{\sqrt{\frac{2}{3}}}&\ctrl{2}&\ctrl{1}&\gate{\sqrt{\frac{1}{2}}}&\ctrl{1}&\meter\\
\lstick{\ket{0}}& \qw& \qw &\qw &\qw&\qw &\qw&\qw &\qw&\qw &\qw&\qw &\qw&\qw &\qw&\qw &\qw&\qw &\ctrl{3}&\gate{\sqrt{\frac{3}{5}}} &\ctrl{3}&\qw&\qw&\qw&\ctrl{2}&\gate{\sqrt{\frac{2}{4}}}&\ctrl{2}&\qw&\ctrl{-1}&\qw &\ctrl{1}&\gate{\sqrt{\frac{1}{3}}} &\ctrl{1}&\qw&\ctrl{-1}&\qw&\targ&\ctrl{-1}&\targ&\meter\\
\lstick{\ket{0}}& \qw& \qw &\qw &\qw &\qw &\qw &\qw &\qw &\ctrl{3} &\gate{\sqrt{\frac{3}{6}}} &\ctrl{3} &\qw &\qw&\qw &\ctrl{2} &\gate{\sqrt{\frac{2}{5}}} &\ctrl{2} &\qw &\ctrl{-1} &\qw&\ctrl{1}&\gate{\sqrt{\frac{1}{4}}}&\ctrl{1}&\qw&\ctrl{-1}&\qw&\qw&\qw&\qw&\targ&\ctrl{-1}&\targ&\targ&\ctrl{-1}&\targ&\qw&\qw&\qw&\meter\\
\lstick{\ket{0}}& \qw &\gate{X}& \qw &\qw &\qw &\ctrl{2} &\gate{\sqrt{\frac{2}{6}}} &\ctrl{2} &\qw &\ctrl{-1} &\qw &\ctrl{1} &\gate{\sqrt{\frac{1}{5}}} &\ctrl{1} &\qw&\ctrl{-1}&\qw&\qw&\qw&\qw&\targ&\ctrl{-1}&\targ &\targ&\ctrl{-1}&\targ&\targ&\ctrl{-2}&\targ&\qw&\qw&\qw&\qw&\qw&\qw&\qw&\qw&\qw&\meter\\
\lstick{\ket{0}}& \qw &\gate{X} &\ctrl{1} &\gate{\sqrt{\frac{1}{6}}}&\ctrl{1} &\qw &\ctrl{-1} &\qw &\qw &\qw &\qw&\targ &\ctrl{-1} &\targ &\targ &\ctrl{-1} &\targ &\targ &\ctrl{-2}&\targ&\qw&\qw&\qw&\qw&\qw&\qw&\qw&\qw&\qw&\qw&\qw&\qw&\qw&\qw&\qw&\qw&\qw&\qw&\meter\\
\lstick{\ket{0}}& \qw &\gate{X} &\targ &\ctrl{-1} &\targ&\targ &\ctrl{-1} &\targ &\targ &\ctrl{-2} &\targ&\qw&\qw&\qw&\qw&\qw&\qw&\qw&\qw &\qw&\qw&\qw&\qw&\qw&\qw&\qw&\qw&\qw&\qw&\qw&\qw&\qw&\qw&\qw&\qw&\qw&\qw&\qw&\meter\\
}
\end{center}
\caption{Description of the circuit $\ck_{6,3}$}
\label{fig:cir63_0}
\end{figure*}

We show the circuit of $\ket{D^4_2}$ formed according to this construction method
in Figure~\ref{fig:cir63_0}.

We now improve the gate counts of the circuit $\ck_{n,k}$ in the following ways.

\subsection*{Replacing $CR_y$ with $CU$}
We first use the $CU$ gate shown in Section~\ref{sec:3} to implement the transformation $T_1$ corresponding
to each $\mu$ transformation which needs one CNOT and two single qubit gates to be implemented. 
Therefore each $\mu$ transformation needs three CNOT and two single qubit gates.
Since there are $n-1$ $\mu$ transformations this reduces 
the number of CNOT by $n-1$ for any $\ket{D^n_k}$.

Next we observe that some of the $\mu_n$ and $\M^l_n$ transformations
act as identity transformation, which we count as a function of $k$
for any $\ket{D^n_k}$.
\subsection*{The $\mu$ and $\M$ transformations that act like Identity}
Let there be a $n$ qubit system in some state $\ket{\phi}$.
This state can be uniquely represented as a superposition of all $2^n$ 
(computational) basis state. 
The amplitude of a particular basis state may or may not be zero depending
on the description of $\ket{\phi}$. We call a basis state with 
non zero amplitude an active basis state.
The affect of a unitary transformation $T$ on this state can be completely described by
observing how it transforms the active basis states of $\ket{\phi}$.
If the $k$-th qubit is in the zero (one) state in all the active basis states
and the transformation $T$ doesn't act on the $k$-th qubit in non trivial way on any
of those basis states then the $k$-th qubit in all the
active basis states of $T\ket{\phi}$ will also be in the zero(one)
state.
Using this simple fact we prove the following theorem using induction.

\begin{theorem}
If the $n$ qubit system is expressed as superposition of  
computational basis states after the block $SCS^{n-t}_k$ has acted
then it can be expressed as 
\begin{align*}
\sum\limits_{a=0}^{2^{t+1}-1} ~ \sum\limits_{b=0}^{2^{t+1}-1}\alpha_{a,b}\Big( &\ket{0}^{\otimes n-k-1-t} 
\big(\bigotimes_{i=1}^{t+1}\ket{a^{bin}_i} \big)\\
&\ket{1}^{\otimes k-1-t}
\big( \bigotimes_{j=1}^{t+1}\ket{b^{bin}_j} \big)\Big).
\end{align*}
\end{theorem}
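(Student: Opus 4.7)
The plan is to prove this by induction on $t$, with base case $t=-1$ corresponding to the initial state $\ket{0}^{n-k}\ket{1}^k$ before any $SCS$ block has been applied; in that case the formula collapses to a single term (both index sets $\{0,\ldots,2^0-1\}$ are singletons and both tensor products are empty), with the unique coefficient $\alpha_{0,0}=1$. I will then assume the claim for $t-1$ and prove it for $t$.

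For the inductive step, my first observation is that every gate in $SCS^{n-t}_k$ — namely $\mu_{n-t}$ on qubits $n-t-1,n-t$ and $\M^l_{n-t}$ for $l\in\{n-t-k+1,\ldots,n-t-1\}$ on qubits $l-1,l,n-t$ — touches only qubits whose index lies in the range $[n-t-k,\,n-t]$. Consequently, qubits $1,\ldots,n-k-t-1$ are untouched and remain in $\ket{0}$ (giving the leading $\ket{0}^{\otimes n-k-1-t}$), and qubits $n-t+1,\ldots,n$ are untouched and retain their old arbitrary pattern on $t$ qubits (accounting for the last $t$ qubits of the new trailing $\ket{b}_{t+1}$, whose first qubit is the newly activated qubit $n-t$).

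The heart of the argument is to show that within the touched window $[n-t-k,n-t]$, the block maps every basis state of the form $\ket{0}\ket{a_1\ldots a_t}\ket{1}^{k-t}$ to a superposition of basis states of the form $\ket{a_0'a_1'\ldots a_t'}\ket{1}^{k-t-1}\ket{b_0'}$, with the central $k-t-1$ positions pinned at $1$. The key structural lemma, immediate from the five defining mappings of $\M^l_n$, is that $\M^l_n$ preserves the value of qubit $l$ on each of its defined inputs. Using the gate ordering visible in Figure~\ref{fig:cir63_0} for $\ck_{6,3}$ (first $\mu_{n-t}$, then $\M^l_{n-t}$ in decreasing order of $l$), I would then split into cases: $\mu_{n-t}$ and $\M^l_{n-t}$ for $l\geq n-k+2$ all receive $\ket{11}$-type inputs (their qubits $l-1,l$ lie in the middle 1-region and qubit $n-t$ is still in state $1$), so they act as identity; the crucial gate $\M^{n-k+1}_{n-t}$ may create the split that flips qubit $n-k$ to $1$ while flipping qubit $n-t$ to $0$, yet leaves qubit $n-k+1$ in state $1$ by the key lemma; and $\M^l_{n-t}$ for $l\leq n-k$ touch only qubits in the arbitrary region A and qubit $n-t$, never the surviving middle 1-region $[n-k+1,n-t-1]$.

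The main obstacle will be verifying that the circuit never feeds an undefined input into the partial transformations $\mu$ and $\M$ during the block. My approach is a sub-induction on the gate sequence within $SCS^{n-t}_k$, maintaining the invariant that before each $\M^l_{n-t}$ is applied, every active basis state restricts the triple $(q_{l-1},q_l,q_{n-t})$ to one of the five defined patterns. The decreasing-$l$ ordering is exactly what enforces this invariant: the large-$l$ gates act while qubit $n-t$ has not yet been disturbed (so the third coordinate is always $1$ and qubits $l-1,l$ are still pure $1$'s from the 1-region), while the small-$l$ gates act after qubit $n-t$ could have flipped, but by then qubits $l-1,l$ live in the A-region whose values are compatible with the defined patterns $\ket{00}$ or $\ket{01}$ paired with either value of qubit $n-t$. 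Carrying out this sub-induction is where the proof really lives; the rest is bookkeeping of indices and tensor-product positions.
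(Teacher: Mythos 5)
Your induction is the same one the paper uses: the same splitting of the block $SCS^{n-t}_k$ into (i) the gates $\mu_{n-t}$ and $\M^{l}_{n-t}$ with $l\geq n-k+2$, which see only $\ket{11}$ resp.\ $\ket{111}$ and hence act as identity, (ii) the boundary gate $\M^{n-k+1}_{n-t}$, which preserves qubit $n-k+1$ on both of its possible defined inputs $\ket{011}$ and $\ket{111}$ while possibly activating qubits $n-k$ and $n-t$, and (iii) the remaining gates, which act only on qubits the conclusion already declares arbitrary. Your explicit ``key lemma'' (that $\M^{l}_{n}$ preserves the value of qubit $l$ on every defined input) is left implicit in the paper but is exactly the fact being used, and your base case at $t=-1$ is equivalent to the paper's base case at $t=0$.

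The one place you would get stuck is the sub-induction you flag as ``where the proof really lives.'' The invariant you want to maintain --- that before each $\M^{l}_{n-t}$ with $l\leq n-k$ the triple $(q_{l-1},q_l,q_{n-t})$ lies in the defined subspace because adjacent A-region qubits are only ever $\ket{00}$ or $\ket{01}$ --- does not follow from the inductive hypothesis. The theorem allows arbitrary amplitudes $\alpha_{a,b}$ over all patterns of the two free regions, so nothing excludes an active basis state with $\ket{10}$ or $\ket{11}$ on $(q_{l-1},q_l)$ together with $q_{n-t}=0$, which is outside the defined subspace of $\M^{l}_{n-t}$. Proving your invariant would require strengthening the statement being inducted on (the true circuit invariant is that the unprocessed prefix is always of the form $0^a1^b$, which the theorem as stated deliberately forgets). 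Fortunately you do not need it: for $l\leq n-k$ the gate $\M^{l}_{n-t}$ touches only qubits $l-1,l\in[n-k-t,\,n-k]$ and $n-t$, all of which sit inside the blocks the conclusion leaves arbitrary, so whatever unitary is applied there, the claimed form survives. Defined-input bookkeeping is only required for the gates that touch the pinned qubits, i.e., those in (i) and (ii), and for those the weak hypothesis already pins every relevant input qubit. Drop the sub-induction and your proof closes, matching the paper's; keep it and you need a stronger induction hypothesis than the one stated.
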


\begin{proof}
The statement implies that the first $n-k-t-1$ qubits 
are all in the state $\ket{0}$ and the $(n-k+1)$-th qubit
and the next $k-t-1$ qubits are all in the state $\ket{1}$
in all active basis states of the $n$ qubit system after the
block $SCS^{n-t}_k$ has acted on it.

We first prove the statement for $t=0$.
The $n$ qubit system is at first in the state 
$\ket{\psi_0}=\ket{0}^{\otimes (n-k)}\ket{1}^{\otimes k}$
and the block to be applied is $SCS^n_k$.
This block consists of the gates $\mu_n$, $\M^{i}_n,n-1 \leq i \leq n-k+1$.
We know that the transformation $\mu_n$ affects the $(n-1)$ and the $n$-th qubit 
and the transformation $\M^l_n$ affects the $(l-1)$ and $n$ th qubit. Additionally
$\mu_n$ acts as identity on a basis state if the $n-q$th qubit of the
basis state is in the $\ket{1}$ state. Similarly 
$\M^l_n$ acts as identity on a basis state
if the $(n-l-1)$-th qubit is in the $\ket{1}$ state.

The last $k$ qubits of $\ket{\psi_0}$ are in the state $\ket{1}$
and therefore $\mu_n$ and $\M^{n-1}_n, \M^{n-2}_n, \ldots \M^{n-k+2}_n$
act as identity transformations.
Finally the transformation $\M^{n-k+2}_n$ is applied. 
The first qubit to this transformation is in the state $\ket{0}$
and therefore this transformation may lead to basis states with 
either $\ket{0}$ or $\ket{1}$ in the $(n-k)$-th and $n$-th positions.
Therefore the resultant state can be written as 
$$ \displaystyle \sum_{a_1,a_2 \in \{0,1\}} \alpha_{a_1a_2} 
\ket{0}^{\otimes n-k-1}\ket{a_1}\ket{1}^{\otimes k-1}\ket{a_2}.$$

Thus the first $n-k-1$ qubits are all in the state $\ket{0}$ and
the $(n-k+1)$-th qubit and the next $k-1$ qubits are all in the $\ket{1}$ 
state in all active basis states of the system.This concludes the base case. 

Now assuming that our statement holds true for some $t-1<k-2$ 
we show that the statement also holds for $t$.
The $SCS^{n-t}_k$ block is composed of the transformations
$\mu_{n-t}$, $\M^{i}_{n-t},n-t-1 \leq t \leq n-t-k+1$.
The $n$ qubit system is in the state
\begin{align*}
\ket{\psi_{t-1}} 
=\sum\limits_{a=0}^{2^{t}-1} ~ \sum\limits_{b=0}^{2^{t}-1} \alpha_{a,b} &\Big(\ket{0}^{\otimes n-k-t} 
\big(\bigotimes_{i=1}^{t}\ket{a^{bin}_i} \big) \\
& \ket{1}^{\otimes k-t}
\big( \bigotimes_{j=1}^{t}\ket{b^{bin}_j} \big)\Big).
\end{align*}
That is, the first $n-k-t$ qubits are in the state $\ket{0}$ in all active basis states
and the $n-k+1$ and the next $k-t-1$ qubits are in the state $\ket{1}$.
These are the first qubits to the transformations
$\mu_{n-t}$,$\M^{i}_{n-t}, n-t-1 \leq i \leq n-k$.
This implies the $\mu$ transformation and the $(k-2-t)$ $\M$ transformations
act as identity transformations on all active basis states.

The next $t$ $\M$ transformations may get the $\ket{0}$ state as the first
qubit and therefore the $n$-qubit system
before the last $\M$ has been applied is in the state
\begin{align*}
\ket{\psi_{t-1}'}
=\sum\limits_{a=0}^{2^{t}-1} ~ \sum\limits_{b=0}^{2^{t+1}-1} \alpha_{a,b}\Big( &\ket{0}^{\otimes n-k-t} 
\big(\bigotimes_{i=1}^{t}\ket{a^{bin}_i} \big) \\
&\ket{1}^{\otimes k-1-t}
\big( \bigotimes_{j=1}^{t+1}\ket{b^{bin}_j} \big)\Big).
\end{align*}
Finally the last three qubit transformation of  the block $SCS^{n-t}_k$
$\M^{n-t-k+1}_{n-t}$ acts on the system. 
Now since the $(n-t-k)$-th qubit is in the state $\ket{0}$ in all active basis states,
the $\M$ gate may non trivially act on it and the $(n-t)$-th qubit.
This results in the state
\begin{align*}
\ket{\psi_t}
=\sum\limits_{a=0}^{2^{t+1}-1} ~ \sum\limits_{b=0}^{2^{t+1}-1} \alpha_{a,b}\Big( &\ket{0}^{\otimes n-k-t-1} 
\big(\bigotimes_{i=1}^{t+1}\ket{a^{bin}_i} \big) \\
&\ket{1}^{\otimes k-1-t}
\big( \bigotimes_{j=1}^{t+1}\ket{b^{bin}_j} \big)\Big).
\end{align*}
This completes the proof.
It is important to note that there may be many basis states in the expression of $\ket{\psi_t}$
with zero amplitude. However our focus is on qubits that are definitely going to be either in the
zero state or in the one state in all active basis states.
\end{proof}

This proof also shows that the $\mu$ transformation and the $k-2-t$ $\M$ transformations
in the block $SCS^{n-t}_k,t<k-1$ act as identity transformations and 
therefore can be removed from the circuit, which is the second improvement. 
Therefore the number of $\mu$ transformations omitted is $k-1$ and 
the number of $\M$ transformation omitted are $\frac{(k-2)(k-1)}{2}$.
This removes $3(k-1)+\frac{5(k-2)(k-1)}{2}$ CNOT and 
$2(k-1)+\frac{4(k-2)(k-1)}{2}$ single qubit gates.

\subsection*{The first non identity $\M$ transformation in $SCS^n_k$}
Having removed the $\mu$ and $\M$ transformations we now
look at the first transformation of the block $SCS^{n-t}_k,t<k-1$
which is $\M^{n-k+1}_{n-t}$.
This transformation depends on the state of the $n-k,n-k+1$ and $(n-t)$-th qubits
and affects the state of the $(n-k)$-th and the $(n-t)$-th qubit.
At this stage the $n=$ qubit system is at the state
\begin{small}
$$
\sum\limits_{a=0}^{2^{t}-1} ~ \sum\limits_{b=0}^{2^{t}-1} \alpha_{a,b} \ket{0}^{\otimes n-k-t} 
\Big(\bigotimes_{i=1}^{t}\ket{a^{bin}_i} \Big) \ket{1}^{\otimes k-t}
\Big( \bigotimes_{j=1}^{t}\ket{b^{bin}_j} \Big).$$
\end{small}

Therefore in all the active basis states both the $n-k$th and the $n-t$th 
qubits are in the state $\ket{1}$.
Therefore the three qubit transformation applied by $\M^{n-k+1}_{n-t}$
can be expressed as follows substituting $l=n-k+1$:
\begin{align*}
& \ket{11}_{l}\ket{1}_{n-t} \rightarrow \ket{11}_{l}\ket{1}_{n-t} \\
& \ket{01}_{l}\ket{1}_{n-t} \rightarrow 
\sqrt{\frac{n-t-l+1}{n-t}}\ket{01}_{l}\ket{1}_{n-t} \\
&~~~~~~~~~~~~~~~~~~~~~~~~~~~~~~+\sqrt{\frac{l-1}{n-t}}\ket{11}_{l}\ket{0}_{n-t}.
\end{align*}
This is in-fact can be implemented as a 
a two qubit transformation of the type $\mu$.
as the $(n-k+1)$-th qubit is in the $\ket{1}$ 
state in all the active basis states.

The transformation acts on the $(n-k)$-th and $(n-t)$-th qubits as 
$\M^{n-k+1}_{n-t} \equiv   ({\sf CNOT}^{n-k}_{n-t})(CU^{n-t}_{n-k}(\theta)) ({\sf CNOT}^{n-k}_{n-t})$
where $\theta =2 \cos^{-1}\Big(\sqrt{\frac{n-t-l+1}{n-t}} \Big)$. 
We know that the $CU$ gate requires one CNOT and two $R_y$ gates to be implemented
therefore $\M^{n-k+1}_{n-t}$ requires only three CNOT and two $R_y$ gates.
This improvement is reflected for all $SCS^{n-t}_k$ such that $n-t \geq n-k+2$ that is for
$0 \leq t \leq k-2$.
Therefore it reduces the number of CNOT gate in the circuit by further $2(k-1)$
and the number of single qubit gates by $2(k-1)$ as well.

Additionally for $\ket{D^n_k},~k>1$ when $SCS^n_k$ is applied the $n$-qubit 
system is in the state $\ket{0}^{\otimes n-k}\ket{1}^{\otimes k}$
and therefore the transformation $\M^{n-k+1}_{n-t}$
only acts on the basis state
$\ket{011}$. The corresponding transformation is 
\begin{align*}
\ket{01}_{n-k+1}\ket{1}_{n} \rightarrow &
\sqrt{\frac{k}{n}}\ket{01}_{n-k+1}\ket{1}_{n} \\
&~~~~+\sqrt{\frac{n-k}{n}}\ket{11}_{n-k+1}\ket{0}_{n}.
\end{align*}
This can be implemented using a $R_y(\cos^{-1}\sqrt{\frac{k}{n}})$ on the $(n-k)$-th qubit
followed by a CNOT gate ${\sf CNOT}^{n-k}_n$
which removes further two CNOT and one single qubit gate.
\begin{center}
\begin{figure*}[!h]
\hspace*{1cm}
\Qcircuit @C=0.2em @R=0.2em {
\lstick{\ket{0}}  &\qw &\qw&\qw &\qw&\qw &\qw&\qw &\qw&\qw&\qw &\qw&\qw&\qw &\qw&\qw &\qw&\qw &\qw&\qw&\qw&\qw&\ctrl{3}&\gate{\sqrt{\frac{3}{4}}}&\ctrl{3}&\qw&\qw&\qw&\ctrl{2}&\gate{\sqrt{\frac{2}{3}}}&\ctrl{2}&\ctrl{1}&\gate{\sqrt{\frac{1}{2}}}&\ctrl{1}&\meter\\
\lstick{\ket{0}} &\qw&\qw &\qw&\qw &\qw&\qw &\qw&\qw &\qw&\qw &\qw&\qw &\ctrl{3}&\gate{\sqrt{\frac{3}{5}}} &\ctrl{3}&\qw&\qw&\qw&\ctrl{2}&\gate{\sqrt{\frac{2}{4}}}&\ctrl{2}&\qw&\ctrl{-1}&\qw&\ctrl{1}&\gate{\sqrt{\frac{1}{3}}} &\ctrl{1}&\qw&\ctrl{-1}&\qw&\targ&\ctrl{-1}&\targ&\meter\\
\lstick{\ket{0}}  &\qw &\qw &\qw &\gate{\sqrt{\frac{3}{6}}}&\ctrl{3}&\qw &\qw &\qw&\qw &\ctrl{2} &\gate{\sqrt{\frac{2}{5}}} &\ctrl{2} &\qw &\ctrl{-1} &\qw&\ctrl{1}&\gate{\sqrt{\frac{1}{4}}}&\ctrl{1}&\qw&\ctrl{-1}&\qw&\qw&\qw&\qw&\targ&\ctrl{-1}&\targ&\targ&\ctrl{-1}&\targ&\qw&\qw&\qw&\meter\\
\lstick{\ket{0}}& \qw &\gate{X} &\qw &\qw &\qw&\qw &\qw &\qw &\qw &\qw&\qw&\qw&\qw&\qw&\qw&\targ&\ctrl{-1}&\targ &\targ&\ctrl{-1}&\targ&\targ&\ctrl{-2}&\targ&\qw&\qw&\qw&\qw&\qw&\qw&\qw&\qw&\qw&\meter\\
\lstick{\ket{0}}& \qw &\gate{X} &\qw &\qw &\qw &\qw&\qw &\qw &\qw &\targ &\ctrl{-2} &\targ &\targ &\ctrl{-2}&\targ&\qw&\qw&\qw&\qw&\qw&\qw&\qw&\qw&\qw&\qw&\qw&\qw&\qw&\qw&\qw&\qw&\qw&\qw&\meter\\
\lstick{\ket{0}}& \qw &\gate{X}  &\qw &\qw &\targ&\qw&\qw&\qw&\qw&\qw&\qw&\qw&\qw &\qw&\qw&\qw&\qw&\qw&\qw&\qw&\qw&\qw&\qw&\qw&\qw&\qw&\qw&\qw&\qw&\qw&\qw&\qw&\qw&\meter\\
}
\caption{Description of the Circuit $\ckt_{6,3}$}
\label{fig:cir63_1}
\end{figure*}
\end{center}
We denote this circuit by $\ckt_{n,k}$. Figure~\ref{fig:cir63_1} shows the structure of $\ckt_{6,3}$.
Combining the results we get the following count of CNOT and single
qubit gates in the improved circuit.
We now calculate the total improvement in the CNOT and single qubit gate counts 
for the $\ket{D^n_k},k>1$ preparation circuit $\ckt_{n,k}$. 
\begin{itemize}
\item The total number of CNOT gates removed $=n-1+3(k-1)+\frac{5(k-2)(k-1)}{2}+2(k-1)+2$.

\item The total number of single qubit gates removed $=2(k-1)+\frac{4(k-2)(k-1)}{2}+2(k-1)+1$. 
\end{itemize}
Therefore the total number of CNOT gates present in the circuit is
\begin{align*} 
& 5(nk- \frac{k(k+1)}{2})-n+1  \\
&-\Big(n-1+3(k-1)+\frac{5(k-2)(k-1)}{2}+2(k-1) +2 \Big) \\
&=5nk-5k^2-2n
\end{align*}
The number of single qubit gates present in the circuit is 
\begin{align*}
& 4(nk- \frac{k(k+1)}{2}-n+1)+2(n-1) \\
&-\Big(2(k-1)+\frac{4(k-2)(k-1)}{2}+2(k-1)+1 \Big) \\
& = 4nk-4k^2-2n+1
\end{align*}

For $k=1$ We get the number of CNOT as $3n-3$ (from $n-1$ $\mu$ transformations) 
and the number of single qubits gate as $2n-2$.
However, one CNOT gate can be further removed from each $\mu$ gate as the
active basis states in input to the $\mu$ transformations are only $\ket{00}$ and $\ket{01}$.
The resultant circuit is identical to the linear $W_n$ preparation circuit in~\cite{dn1} 
and contains $2n-2$ CNOT and $2n-2$ single qubit gates and thus we don't elaborate it further. 

We know that the state $\ket{D^n_k}$ can be prepared by first forming the state $\ket{D^n_{n-k}}$ and then
applying a X gate to each qubit. On that note it is interesting to observe that after the improvements 
the Circuits for $\ket{D^n_k}$ and $\ket{D^n_{n-k}}$ require the same number of CNOT gates. 

\begin{table}[H]
\centering
\begin{tabular}{|c|c|c|c|c|c|c|}
\hline 
\diagbox{$k$}{$n$} & 4 & 5 & 6 & 7& 8 \\ \hline
2& 22,12& 31,20 &40,28 &49,36 &58,44  \\ \hline
3& 27,7 & 41,20 &55,33 &69,46 &83,59  \\ \hline
4&      & 46,10 &65,28 &84,46 &103,64 \\ \hline
5&      &       &70,13 &94,36 &118,59 \\ \hline
6&      &       &      &99,16 &128,44 \\ \hline
7&      &       &      &      &133,19 \\ \hline
\end{tabular}
\caption{CNOT gate count of the pair $\ck_{n,k},\ckt_{n,k}$}
\label{tab:cnotcomp}
\end{table}

\begin{table}[H]
\centering
\begin{tabular}{|c|c|c|c|c|c|c|}
\hline 
\diagbox{$k$}{$n$} & 4 & 5 & 6 & 7& 8 \\ \hline
2&14,9  &20,15 & 26,21 &32,27 & 38,33 \\ \hline
3&18,5  &28,15 & 38,25 &48,35 & 58,45 \\ \hline
4&      &32,7  & 46,21 &60,35 & 74,49 \\ \hline
5&      &      & 50,9  &68,27 & 86,45 \\ \hline
6&      &      &       &72,11 & 94,33	\\ \hline
7&      &      &       &      & 98,13	\\ \hline
\end{tabular}
\caption{Single qubit gate count of the pair $\ck_{n,k},\ckt_{n,k}$}
\label{tab:scomp}
\end{table}

Table~\ref{tab:cnotcomp} and~\ref{tab:scomp} show the number of CNOT and
single qubit gates needed to implement
the states $\ket{D^n_k}$ for $4 \leq n \leq 8, 1 \leq k \leq n-1$, respectively.

In the next section we show that 
our observations not only reduces the gate counts of the circuit 
but also reduces its architectural constraints.
\section{Actual Implementation and architectural constraints}
\label{sec:5}
\subsection{Architectural Constraints}
We are at the stage where quantum circuits can be implemented on actual quantum computers 
using cloud services, such as IBM Quantum Experience, also known as IQX~\cite{ibmq}.
However the architecture of the individual back-end quantum machines pose restrictions to implementation
of a particular circuit. The most prominent constraint is that of the CNOT implementation.
In this regard we use the terms architectural constraint and CNOT constraint interchangeably. 
Every quantum system $Q$ with $n$ (physical) qubits has a CNOT map, which we express as $G^Q_A(V^Q,E^Q)$
where $V^Q=\{q_1,q_2,\ldots q_n\}$.
In this graph the nodes represent the qubits and the edges represent CNOT implementability.
A directed edge $q_i \rightarrow q_j$ implies that a CNOT can be implemented
with $q_i$ as control and $q_j$ as target in the system $Q$.
The edges in the CNOT maps of all the publicly available IQX machines are bidirectional.

Let there be a circuit $\ck$ on $n$ qubits. We denote the (logical) qubits 
$c_1,c_2,\ldots c_n$. We also have a CNOT map corresponding to the circuit, 
which describes the CNOT gates used in the circuit. We describe this as the directed graph 
$G_{\ck}(V^{\ck},E^{\ck})$ where $V^{\ck}=\{ c_1,c_2, \ldots c_m \}$
and there is a directed edge $c_i \rightarrow c_j$ if there is one or more CNOT with
$c_i$ as control and $c_j$ as target.

Therefore if the graph $G_{\ck}$ can shown to be the subgraph of $G^Q_A$ by some mapping of the 
logical qubits to the physical qubits then the circuit $\ck$ can be implemented 
on the architecture with the same number of CNOT gates. However, if such a map is not possible,
then the circuit can be implemented on that architecture either by using
swap gates which require additional CNOT gates or changing the construction of the
circuit.
There are mapping solutions such as the one applied by IQX which dynamically changes 
the structure of the circuit to implement a circuit in an architecture that does not
meet the circuit's CNOT constraints.
Similarly in their paper Zulehner et.al~\cite{efmap} have also proposed an efficient mapping solution. 
However it is not always possible to avoid an increase in the number of CNOT gates.
Given a circuit it is crucial to find it's minimal architectural needs
in terms of the CNOT map without increasing the CNOT gate count.
The IBM-Q systems mapping solutions show the modified circuit as the transpiled 
circuit given any circuit as input, although its solutions are not always optimal. 
In this paper we first consider the system ``${\sf ibmqx2}$'' ($Q_1$) of IQX.
The CNOT map of $Q_1$ is shown in the Figure~\ref{fig:cmapqx2}.
\begin{figure}[H]
\begin{center}
\begin{tikzpicture}[-latex ,auto ,node distance =1.5 cm and 1.5cm ,on grid ,
semithick ,
state/.style ={ circle ,top color =white , bottom color = white ,
draw,black, text=black , minimum width =0.5 cm}]
\node[state] (E) {$1$};
\node[state] (A)[above=of E] {${2}$};
\node[state] (B) [right=of E] {${3}$};
\node[state] (C) [left=of E] {$0$};
\node[state] (D) [below=of E] {$4$};

\path (A) edge (E);\path (E) edge (A);
\path (B) edge (E);\path (E) edge (B);
\path (C) edge (E);\path (E) edge (C);
\path (D) edge (E);\path (E) edge (D);
\path (A) edge (C);\path (C) edge (A);
\path (D) edge (B);\path (B) edge (D);

\end{tikzpicture}
\end{center}
\caption{The CNOT map of $Q_1$ represented as $G^{Q_1}_A$}
\label{fig:cmapqx2}
\end{figure}

Against this backdrop we observe the CNOT constraints of the circuit $\ck_{4,2}$,
implemented to prepare $\ket{D^4_2}$.
Then we implement the circuit $\ckt_{4,2}$ which is the result of the improvements
shown in Section~\ref{sec:4}. We observe that the improvement proposed by us not only
reduces gate counts but also reduces CNOT constraints. 
We implement these circuits in the system $Q_1$ and
compare the measurement statistics of the two circuits by measuring the deviation from
the ideal measurement statistics and find that the results of $\ckt_{4,2}$ is much more
closely aligned with the ideal results.
We end this section by showing how some changes in the circuit $\ckt_{4,2}$ possible 
because of partially defined transformations can lower the error in the circuit due to 
CNOT on expectation without a reduction in number of CNOT gates or
change in the CNOT constraints.

\subsection{Implementation and Improvement for $\ket{D^4_2}$}
We start by constructing the circuit $\ck_{4,2}$.
We implement every $CR_y$ gate using two CNOT gates and two $R_y$ gate
as we know that the $CR_y$ gate needs at least $4$ gates to be implemented 
and every three qubit $\M^l_n$ transformation using five CNOT and four $R_y$ gates (as given in 
the description of~\cite{dicke}).  
The resultant circuit $\ck_{4,2}$ is shown in Figure~\ref{fig:cir42_0}.
This circuit contains $22$ CNOT gates.
We use the notation $\theta^x_y$ to denote the angle $2\cos^{-1}(\sqrt{\frac{x}{y}})$.

The CNOT map of the circuit is shown in Figure~\ref{fig:cmapstage1}.
\begin{figure*}[!h]
\centering
\footnotesize
\hspace*{0.5cm}
\Qcircuit @C=0.2em @R=1em {
\lstick{\ket{0}}& \qw & \qw & \qw & \qw & \qw & \qw & \qw & \qw & \qw & \qw & \qw & \qw & \qw & \qw & \qw & \qw & \qw & \qw & \qw & \qw & \qw & \qw & \qw &\targ &\ctrl{2} &\gate{\frac{-\theta^{2}_{3}}{4}}~ &\targ ~ &\gate{\frac{\theta^{2}_{3}}{4}} ~ &\targ &\gate{\frac{-\theta^{2}_{3}}{4}} &\targ &\gate{\frac{\theta^{2}_{3}}{4}} &\ctrl{2} &\ctrl{1} &\targ &\gate{\frac{-\theta^{1}_{2}}{2}}&\targ &\gate{\frac{\theta^{1}_{2}}{2}} &\ctrl{1} &\qw &\meter\\
\lstick{\ket{0}}& \qw & \qw & \qw & \qw & \qw & \qw & \qw & \qw & \targ & \ctrl{2} & \gate{\frac{-\theta^{2}_{4}}{4}} &\targ &\gate{\frac{\theta^{2}_{4}}{4}} &\targ &\gate{\frac{-\theta^{2}_{4}}{4}} &\targ &\gate{\frac{\theta^{2}_{4}}{4}} &\ctrl{2}  &\ctrl{1} &\targ &\gate{\frac{-\theta^{1}_{3}}{2}} &\targ &\gate{\frac{\theta^{1}_{3}}{2}} &\ctrl{-1} & \qw & \qw & \qw & \qw &\ctrl{-1} & \qw & \qw & \qw & \qw &\targ &\ctrl{-1} & \qw  &\ctrl{-1}& \qw &\targ& \qw &\meter \\
\lstick{\ket{0}}& \qw & \gate{X} & \qw & \ctrl{1} & \targ & \gate{\frac{-\theta^{1}_{4}}{2}} & \targ & \gate{\frac{\theta^{1}_{4}}{2}} & \ctrl{-1} & \qw & \qw & \qw & \qw & \ctrl{-1} & \qw & \qw & \qw &\qw &\targ &\ctrl{-1} &\qw &\ctrl{-1} &\qw&\qw &\targ &\qw &\ctrl{-2} &\qw &\qw  &\qw &\ctrl{-2} &\qw &\targ & \qw & \qw & \qw & \qw & \qw & \qw &\qw &\meter \\
\lstick{\ket{0}}& \qw & \gate{X} & \qw & \targ & \ctrl{-1}& \qw & \ctrl{-1} & \qw & \qw &\targ & \qw &\ctrl{-2}& \qw & \qw & \qw &\ctrl{-2} & \qw &\targ & \qw & \qw & \qw & \qw & \qw & \qw & \qw & \qw & \qw & \qw & \qw & \qw & \qw & \qw & \qw & \qw & \qw & \qw & \qw & \qw &\qw&\qw &\meter \\
}
\caption{The circuit $\ck_{4,2}$ due to~\cite{dicke}}
\label{fig:cir42_0}
\end{figure*}

\begin{figure}[H]
\begin{center}
\begin{tikzpicture}[-latex ,auto ,node distance =1.5 cm and 1.5cm ,on grid ,
semithick ,
state/.style ={ circle ,top color =white , bottom color = white ,
draw,black, text=black , minimum width =0.5 cm}]
\node[state] (A) {${q_0}$};
\node[state] (B) [right=of A] {${q_1}$};
\node[state] (C) [below=of A] {$q_2$};
\node[state] (D) [below=of B] {$q_3$};

\path (A) edge (B);\path (B) edge (A);
\path (B) edge (C);\path (C) edge (B);
\path (C) edge (D);\path (D) edge (C);
\path (D) edge (B);\path (B) edge (D);
\path (A) edge (C);\path (C) edge (A);

\end{tikzpicture}
\end{center}
\caption{The CNOT map of $\ck_{4,2}$}
\label{fig:cmapstage1}
\end{figure}
We then implement $\ckt_{4,2}$ by making the following changes to $\ck_{4,2}$.
\begin{enumerate}
\item Implement the $CU^o$ gate instead of $CR_y$ gates.

\item Remove the Redundant $\mu$ and $\M$ transformation.

\item Reduce the gate count in implementation of $\M^{n-k+1}_{n-t}$ type transformations.
\end{enumerate}
This brings the total number of CNOT gates in the circuit to $12$. 
We name the circuit at this stage $\ckt_{4,2}$.

These steps not only reduce the CNOT gates in the circuit but also reduces the CNOT constraints
of the circuit. 
Figure~\ref{fig:cir42_3} shows the circuit at this stage and the reduced CNOT map $G_{\ckt_{4,2}}$ 
as shown in the Figure~\ref{fig:cmapstage3}.

\begin{figure*}[!h]
\centering
\hspace*{0.3cm}
\Qcircuit @C=0.2em @R=1em {
&&&&&& & & & &\lstick{\ket{0}} & \qw & \qw & \qw & \qw & \qw & \qw & \qw & \qw &\targ &\ctrl{2} &\gate{\frac{-\theta^{2}_{3}}{4}} &\targ &\gate{\frac{\theta^{2}_{3}}{4}} &\targ &\gate{\frac{-\theta^{2}_{3}}{4}} &\targ &\gate{\frac{\theta^{2}_{3}}{4}} &\ctrl{2} &\ctrl{1} &\qw &\gate{\frac{\pi}{2}-\frac{\theta^{1}_{2}}{2}}&\targ &\gate{-(\frac{\pi}{2}-\frac{\theta^{1}_{2}}{2})} &\ctrl{1} &\qw &\meter\\
&&&&&& & & & &\lstick{\ket{0}} & \gate{\theta^{2}_{4}} & \qw &\ctrl{2}  &\ctrl{1} &\qw &\gate{\frac{\pi}{2}-\frac{\theta^{1}_{3}}{2}} &\targ &\gate{-(\frac{\pi}{2}-\frac{\theta^{1}_{3}}{2})} &\ctrl{-1} & \qw & \qw & \qw & \qw &\ctrl{-1} & \qw & \qw & \qw & \qw &\targ &\qw & \qw  &\ctrl{-1}& \qw &\targ& \qw &\meter \\
&&&&&& & & & &\lstick{\ket{0}} & \gate{X}  \qw & \qw &\qw &\targ &\qw &\qw &\ctrl{-1} &\qw&\qw &\targ &\qw &\ctrl{-2} &\qw &\qw  &\qw &\ctrl{-2} &\qw &\targ & \qw & \qw & \qw & \qw & \qw & \qw &\qw &\meter \\
&&&&&& & & & &\lstick{\ket{0}} & \gate{X} & \qw &\targ & \qw & \qw & \qw & \qw & \qw & \qw & \qw & \qw & \qw & \qw & \qw & \qw & \qw & \qw & \qw & \qw & \qw & \qw & \qw & \qw &\qw&\qw &\meter \\
}
\caption{The Circuit $\ckt_{4,2}$}
\label{fig:cir42_3}
\end{figure*}

\begin{figure}[H]
\begin{center}
\begin{tikzpicture}[-latex ,auto ,node distance =1.5 cm and 1.5cm ,on grid ,
semithick ,
state/.style ={ circle ,top color =white , bottom color = white ,
draw,black, text=black , minimum width =0.5 cm}]
\node[state] (A) {${q_0}$};
\node[state] (B) [right=of A] {${q_1}$};
\node[state] (C) [below=of A] {$q_2$};
\node[state] (D) [below=of B] {$q_3$};

\path (A) edge (B);\path (B) edge (A);
\path (B) edge (C);\path (C) edge (B);
\path (B) edge (D);
\path (A) edge (C);\path (C) edge (A);

\end{tikzpicture}
\caption{The CNOT map $G^{\ckt_{4,2}}$ corresponding to the circuit $\ckt_{4,2}$}
\label{fig:cmapstage3}
\end{center}
\end{figure}
In fact the Graph $G^{\ckt_{4,2}}$ can be shown to be a subgraph
of $G^{Q_1}_A$ under several mappings of qubits. Therefore this
circuit can be implemented in the {\sf ``ibmqx2''} ($Q_1$) machine with $12$
CNOT.
However the CNOT constraints of the circuits corresponding to
even $D^5_k,k>1$ cannot be met by any IBM-Q architecture at this stage. 
Now we compare the results of the circuits $\ck_{4,2}$ which is due to~\cite{dicke} 
and $\ckt_{4,2}$ which is what we obtained after the reductions and modifications.

\subsection*{Comparison of Measurement Statistics of $\ck_{4,2}$ and $Cir_{4,2}^1$}
The output by an ideal Quantum Computer would produce the
state $\sqrt{ \frac{1}{{n \choose w}}}\displaystyle \sum_{wt(i)=w} \ket{i_2}$
on a correct $\ket{D^n_w}$ preparation circuit.
We first verify the resultant state vectors to of the two circuits to see
that they both ideally produce 
$\sqrt{\frac{1}{6}} \Big(\ket{0011}+\ket{0101}+\ket{0110}+\ket{1100}+\ket{1010}+\ket{1001} \Big)$
and then use a primary error measure based on measurement in computational basis 
to estimate the closeness of the states formed by the two circuits from the ideal state $\ket{D^4_2}$.

We run both the circuits for the maximum possible shots $(8192)$ and use the measurement 
statistics to estimate the closeness to the desired state using the following error measure.
We define our error measure $EM_{n,w}$ for the Dicke state $\ket{D^n_w}$ as follows.
Let $p_i$ be the percentage of times the measurement of the circuit $\ck$ yields the result $i_2$.
Then we have
$$ EM_{n,w}(\ck)= \frac{1}{2} \Big( \displaystyle \sum_{j , wt(j)=w} \abs*{p_j-\frac{1}{{n \choose w}}} +  \displaystyle \sum_{j , wt(j) \neq w} p_j \Big)$$
\begin{figure*}[!ht]
\begin{center}
\includegraphics[scale=0.29]{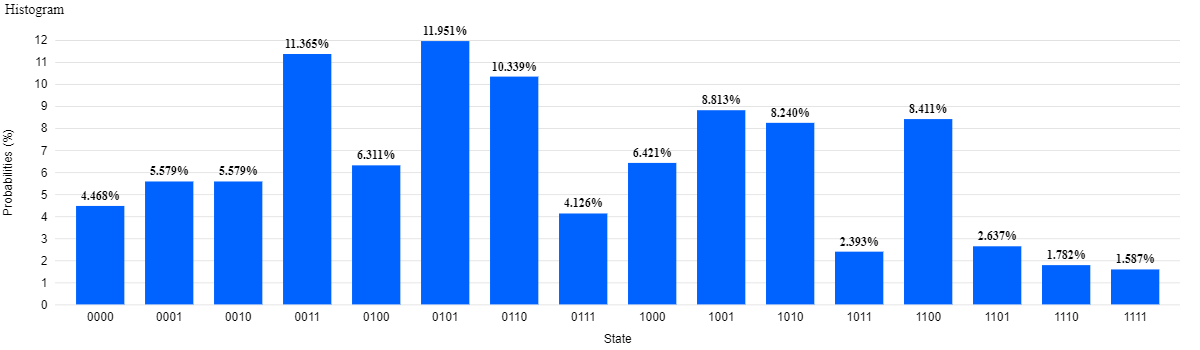}
\caption{Measurement statistics for $\ck_{4,2}$ with $25$ CNOT in transpiled circuit}
\label{fig:D42basic}
\end{center}
\end{figure*}

An $EM$ value of $0$ signifies that the measurement statistics are exactly aligned 
with the expected ideal results while the $EM$ value can at maximum be $1$.
We have calculated the $EM$ values for results of different mappings for the circuit $\ck_{4,2}$.

It is very interesting to see that under different mapping of logical qubits to physical qubits in $Q_1$ from 
the user end the IQX mapping solution provided different transpiled circuits. 
We know that the number of CNOT in $\ck_{4,2}$ is $22$. The transpiled circuits
for $\ck_{4,2}$ had a minimum of $25$ CNOT gates and were as high as $31$ in some cases.
The corresponding transpiled circuit contains $25$ CNOT gates which is the least of all the 
transpiled circuits.
Figure~\ref{fig:D42basic} shows the measurement statistics corresponding to the circuit with the 
minimum $EM$ value, which is equal to $0.4088$.

\begin{figure*}[!ht]
\begin{center}
\includegraphics[scale=0.29]{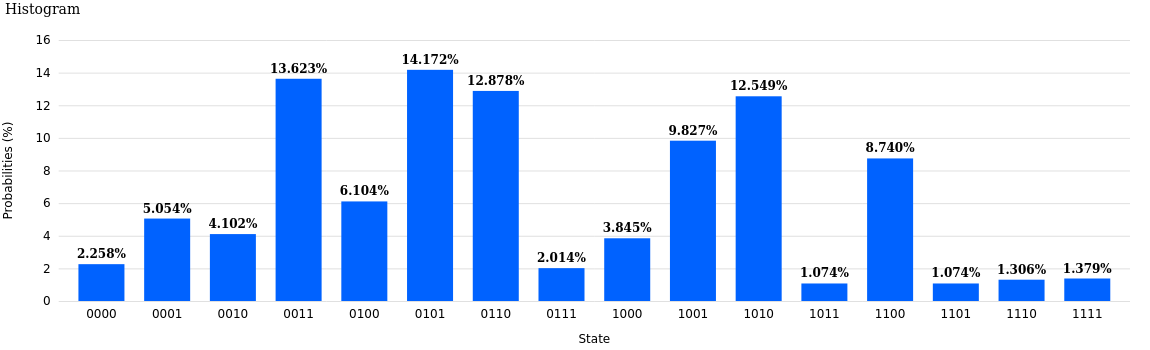}
\caption{Measurement statistics for $\ckt_{4,2}$ corresponding to the map $M_1$}
\label{fig:D42opt}
\end{center}
\end{figure*}

Next we look at the measurement statistics of the circuit $\ckt_{4,2}$.
There are many mappings between logical and physical qubits in this case 
such that the CNOT constraint of the circuit is met. 
Let such a map be $M:\{q_0,q_2,q_3,q_4\} \rightarrow \{0,1,2,3,4\}$. 
Then if there is a CNOT between $q_i$ and $q_j$ then there is an 
edge $M(q_i) \leftrightarrow M(q_j)$ in graph $G^{Q_1}_A$.
In such mappings the IQX mapping solution didn't implement any 
modification in the transpiled circuit as expected.

Here we present the result for the following map $M_1$
\begin{align*}
M1:\quad &  q_0 \rightarrow 3,\quad   q_1 \rightarrow 2,\quad q_2 \rightarrow 4,\quad   q_3 \rightarrow 0.
\end{align*}
Figure \ref{fig:D42opt} shows the measurement statistics corresponding 
to this mapping and the resultant $EM$ value is $0.282103$.
These results show that the circuit $\ck_{4,2}$ needs more than the 
specified number of CNOT while being implemented on ``ibmqx2'' and
the measurement statistics of $\ckt_{4,2}$ is much more closely
aligned with the ideal measurement statistics compared to $\ck_{4,2}$.

\subsection{Modifications leading to different CNOT error distributions}
\label{subsec:1}
We now discuss how we can in fact use partially defined transformations
to further fine tune the circuit $\ckt_{4,2}$ depending on the 
specifications of the architecture. 
We consider a four qubit architecture $A_4$ with the same CNOT
connectivity as $G^{\ckt_{4,2}}$ and only differs in CNOT error distribution.
We then observe how further modifying the circuit $\ckt_{4,2}$
can lead to lower CNOT error on expectation against some 
error distributions in the architecture $A_4$.
We assume every edge  in the CNOT map of $A_4$ is bidirectional,
as is the case with all currently publicly available IBM-Q machines.

The CNOT error rate when applying a CNOT between qubits $i$ and $j$
(such that the edge $i \leftrightarrow j$ is present in $G_A$) is denoted 
as $e_{ij}$.Figure~\ref{fig:arch4} shows the CNOT map of the architecture.

\subsubsection*{CNOT error model}
In this regard we define our error model to calculate CNOT error on expectation
of a circuit implemented in the architecture $A_4$.
The probability of a CNOT placed between qubits $i$ and $j$
acting erroneously in a circuit is dependent on the error rate 
of the corresponding CNOT coupling in the architecture.
We call this CNOT error. 
We denote this probability with 
$f_e:[0,1] \rightarrow [0,1]$. 
We do not assume the exact nature of $f_e$, 
but only that it is directly proportional to error rate 
(i.e. an increasing function) which is by definition. 

Next we define the following Bernoulli random variables to calculate the  
the number of CNOT acting erroneously on expectation
when a circuit is applied on this architecture.  
We define a variable $x_k$ corresponding to each CNOT used in a circuit.
The variable is assigned zero if the $k$-th CNOT is applied correctly while
executing a circuit, and one otherwise.

Let us suppose the $k$-th CNOT is applied between qubits $i$ and $j$.
Then we have $Pr(x_k=1)=f_e(e_{ij})$ and
The expected error while applying the CNOT is $\mathbb{E}(x_k)=f_e(e_{ij})$.
Therefore the CNOT error on expectation 
while implementing a circuit $\ck$ on the architecture is 
$$\mathbb{E}(\ck)=\displaystyle \sum v_{ij} f_e(e_{ij}).$$

Having described the error model we look at the CNOT distribution of the 
circuit $\ckt_{4,2}$ as a weighted graph $G_f$. The vertices and the edges of this graph
is same as that of $G^{\ckt_{4,2}}$.
The weight of an edge $q_i \leftrightarrow q_j$ is the number of CNOT gates 
applied between the two qubits in the circuit. The graph $G_f$ is shown in 
Figure~\ref{fig:cmapdis}.

Now we implement the circuit $\ckt_{4,2}$ on the architecture $A_4$
so that all CNOT constraints can be met.
We observe that only the qubit $1$ has degree $3$ and therefore $q_1$
is mapped to the physical qubit $1$. 
Then we can have the following maps which satisfies all the CNOT constraints.
\begin{enumerate}
\item $q_1\rightarrow 1$,~$q_0\rightarrow 0$,~$q_2\rightarrow 2$,~$q_3\rightarrow 3$.
\item $q_1\rightarrow 1$,~$q_0\rightarrow 2$,~$q_2\rightarrow 0$,~$q_3\rightarrow 3$.
\end{enumerate}
Then the expected CNOT error of the circuit $\ckt_{4,2}$ when applied on the architecture $A_4$ is
$$\mathbb{E}(\ckt_{4,2})= 5f_e(e_{01})+3f_e(e_{02})+3f_e(e_{12})+ f_e(e_{13}).$$

We now show the circuit $\ckt_{4,2}$ (described in Figure~\ref{fig:cir42_3}) can be further modified using partially defined transformations
so that the  CNOT error in the circuit on this architecture will reduce on expectation under some error distribution
conditions. 

\begin{figure}[H]
\centering
\begin{tikzpicture}[-latex ,auto ,node distance =1.5 cm and 1.5cm ,on grid ,
semithick ,
state/.style ={ circle ,top color =white , bottom color = white ,
draw,black, text=black , minimum width =0.5 cm}]
\node[state] (A) {${q_0}$};
\node[state] (B) [right=of A] {${q_1}$};
\node[state] (C) [below=of A] {$q_2$};
\node[state] (D) [below=of B] {$q_3$};

\path (A) edge (B);\path (B) edge node[above] {$5$} (A);
\path (B) edge (C);\path (C) edge node[left] {$3$} (B);
\path (D) edge (B);\path (B) edge node[right] {$1$} (D);
\path (A) edge (C);\path (C) edge node[left] {$3$} (A);

\end{tikzpicture}
\caption{Graph $G_f$ corresponding to $G^{\ckt_{4,2}}$.}
\label{fig:cmapdis}
\end{figure}

\begin{figure}[H]
\centering
\begin{tikzpicture}[-latex ,auto ,node distance =1.5 cm and 1.5cm ,on grid ,
semithick ,
state/.style ={ circle ,top color =white , bottom color = white ,
draw,black, text=black , minimum width =0.5 cm}]
\node[state] (A) {$0$};
\node[state] (B) [right=of A] {$1$};
\node[state] (C) [below=of A] {$2$};
\node[state] (D) [below=of B] {$3$};

\path (A) edge (B);\path (B) edge node[above] {$e_{01}$} (A);
\path (B) edge (C);\path (C) edge node[left] {$e_{12}$} (B);
\path (D) edge (B);\path (B) edge node[right] {$e_{13}$} (D);
\path (A) edge (C);\path (C) edge node[left] {$e_{02}$} (A);

\end{tikzpicture}
\caption{CNOT map of the architecture $A$.}
\label{fig:arch4}
\end{figure}

\begin{figure}[H]
\centering
\begin{tikzpicture}[-latex ,auto ,node distance =1.5 cm and 1.5cm ,on grid ,
semithick ,
state/.style ={ circle ,top color =white , bottom color = white ,
draw,black, text=black , minimum width =0.5 cm}]
\node[state] (A) {${q_0}$};
\node[state] (B) [right=of A] {${q_1}$};
\node[state] (C) [below=of A] {$q_2$};
\node[state] (D) [below=of B] {$q_3$};

\path (A) edge (B);\path (B) edge node[above] {$5$} (A);
\path (B) edge (C);\path (C) edge node[left] {$3$} (B);
\path (C) edge (D);\path (D) edge node[above] {$1$} (C);
\path (A) edge (C);\path (C) edge node[left] {$3$}(A);

\end{tikzpicture}
\caption{Graph $G_f'$ corresponding to $\ckt_{4,2}'$}
\label{fig:cmapmod2}
\end{figure}

\begin{figure*}[!ht]
\centering
\hspace*{0.2cm}
\Qcircuit @C=0.2em @R=1em {
& & & && & & &\quad \quad & \lstick{\ket{0}} & \qw & \qw & \qw &\qw & \qw & \qw & \qw & \qw & \qw &\targ &\ctrl{2} &\gate{\frac{-\theta^{2}_{3}}{4}} &\targ &\gate{\frac{\theta^{2}_{3}}{4}} &\targ &\gate{\frac{-\theta^{2}_{3}}{4}} &\targ &\gate{\frac{\theta^{2}_{3}}{4}} &\ctrl{2} &\ctrl{1} &\qw &\gate{\frac{\pi}{2}-\frac{\theta^{1}_{2}}{2}}&\targ &\gate{-(\frac{\pi}{2}-\frac{\theta^{1}_{2}}{2})} &\ctrl{1} &\qw &\meter\\
& & & && & & &\quad \quad & \lstick{\ket{0}} & \gate{\theta^{2}_{4}} & \qw &\ctrl{1}  &\qw & \qw & \qw &\gate{\frac{\pi}{2}-\frac{\theta^{1}_{3}}{2}} &\targ &\gate{-(\frac{\pi}{2}-\frac{\theta^{1}_{3}}{2})} &\ctrl{-1} & \qw & \qw & \qw & \qw &\ctrl{-1} & \qw & \qw & \qw & \qw &\targ &\qw & \qw  &\ctrl{-1}& \qw &\targ& \qw &\meter \\
& & & && & & &\quad \quad & \lstick{\ket{0}} & \gate{X}  & \qw & \targ & \qw &\ctrl{1} &\qw & \qw &\ctrl{-1} &\qw&\qw &\targ &\qw &\ctrl{-2} &\qw &\qw  &\qw &\ctrl{-2} &\qw &\targ & \qw & \qw & \qw & \qw & \qw & \qw &\qw &\meter \\
& & & && & & &\quad \quad & \lstick{\ket{0}} & \qw & \qw & \qw & \qw & \targ & \qw & \qw & \qw & \qw & \qw & \qw & \qw & \qw & \qw & \qw & \qw & \qw & \qw & \qw & \qw & \qw & \qw & \qw & \qw &\qw&\qw &\meter \\
}
\caption{Circuit description of $\ckt_{4,2}'$ }
\label{fig:cir42_4}
\end{figure*}

The first $R_y$ gate that acts on the second qubit of $\ckt_{4,2}$
is followed by the CNOT gates ${\sf CNOT}^2_3$ and ${\sf CNOT}^2_4$.
The combined transformation $T_4$ of these two CNOT is defined 
only for two basis states on $4$ qubits 
$\ket{0011} \rightarrow \ket{0011}$ and $\ket{0111} \rightarrow \ket{0100}$.

We use the partial nature of the transformation to modify the circuit as follows.
Note that transformation $T_4$ is the first transformation that acts on $q_3$.
Then if we start the circuit from the state $\ket{0010}$ instead of $\ket{0011}$
then we can define the transformation $T_4^1$ such that
\begin{align*}
& T_4^1 \equiv (I_2 \otimes {\sf CNOT}^3_2 \otimes I_2) (I_2 \otimes I_2 \otimes {\sf CNOT}^2_1) \\
\implies & T_4^1\ket{0010}=\ket{0011},~~ T_4^1\ket{0110}=\ket{0100}
\end{align*}
resulting in the same output states as $T_4$ for all the computational
basis states for which $T_4$ is defined. It is important to note that this
implementation would not have been possible if the transformation was defined 
for all the $8$ basis states of the second third and fourth qubits.

We denote this circuit as $\ckt_{4,2}'$ and it is drawn in Figure~\ref{fig:cir42_4}. 
We denote the weighted CNOT map of the circuit $\ckt_{4,2}'$ as $G_f'$ and
it is shown in Figure~\ref{fig:cmapmod2}.

We now see that in $G_f'$ $q_2$ has degree three and therefore any map that meets
all the CNOT constraints will have $q_2 \rightarrow 1$.
Therefore we can have the following maps that satisfies all the CNOT constraints.

\begin{enumerate}
\item $q_2 \rightarrow 1,~~ q_0 \rightarrow 0,~~q_1 \rightarrow 2,~~q_3 \rightarrow 3 $.
\item $q_2 \rightarrow 1,~~ q_0 \rightarrow 2,~~q_1 \rightarrow 0,~~q_3 \rightarrow 3 $.  
\end{enumerate}
The CNOT error on expectation for both the circuits is 
$$\mathbb{E}(\ckt_{4,2}')= 5f_e(e_{02})+3f_e(e_{01})+3f_e(e_{12})+ f_e(e_{13}).$$
Now we calculate the conditions when $\mathbb{E}(\ckt_{4,2}')$ is less than $\mathbb{E}(\ckt_{4,2}')$.
\begin{align*}
& \mathbb{E}(\ckt_{4,2}')<\mathbb{E}(\ckt_{4,2}) \\
\implies & 5f_e(e_{02})+3f_e(e_{01})+3f_e(e_{12})+ f_e(e_{13}) \\
&< 5f_e(e_{01})+3f_e(e_{02})+3f_e(e_{12})+ f_e(e_{13}) \\
\implies & f_e(e_{02})< f_e(e_{01}) \\
\implies & e_{02} < e_{01}.
\end{align*}
This gives us an insight into how different CNOT distributions in a circuit may lead to better results
without reduction in the number of CNOT gates or a reduction in the architectural constraints. 
We conclude this section by describing the architectural constraint of the circuit 
$\ckt_{n,k}$.

\subsection{The CNOT map of $\ckt_{n,k}$}
The CNOT gates in the circuit $\ckt_{n,k}$ are due to 
implementation of the $\mu$ and $\M$ transformation of the different $SCS^n_k$ blocks.  
$\mu_n$ forms an edge in the CNOT map of the form $n-1 \leftrightarrow n$. 
where as $\M^l_n$ forms the edges $(l-1) \leftrightarrow n$ and  $l \rightarrow (l-1)$.
However in the circuit $\ckt_{n,k}$ the transformations $\M^{n-k+1}_t$ do not have a
CNOT between the neighboring qubits $n-k$ and $n-k+1$. 

We divide the edges into two groups. One corresponding to CNOT gates between neighboring 
qubits and one where the positions of the qubits differ at least by two.
We calculate the edges of each of these types.
\begin{itemize}
\item The neighboring qubits with CNOT connections are the qubits $(n-k+1-i)$ and $(n-k-i)$
where $i$ varies from $0$ to $n-k-1$. The other neighboring qubits do not have CNOT 
connections due to removal of identity transformations form those qubits. This results
in $n-k$ edges.

\item Now we consider the second kind of connections.
These connections are formed between $l-1$ and $t$ th qubit for 
any $\M^l_t$ transformation. 

There are $n-k$ $SCS^n_K$ blocks with originally $k-1$ $\M$ transformations in $\ck_{n,k}$
which forms the edges:
$$(n-t) \leftrightarrow (n-t-2-i),~~0 \leq i \leq k-2, 0 \leq t \leq n-k-1.$$

Then there are $k-1$ blocks of $SCS^{i+1}_{i}$ with $i-1$ $\M$ transformations
which forms the edges:
$$(k-t) \leftrightarrow (k-t-2-i),~~ 0 \leq i \leq k-t-2, 0 \leq  t \leq k-2.$$

However in $\ckt_{n,k}$ there are no $\M$ transformations of the type 
$\M^{n-k+1+x}_y,~x>0$. Removing such edges $n-k+x \leftrightarrow y$ 
gives us the complete description of the CNOT map of $\ckt_{n,k}$,
which we denote by $G^{n,k}$.
Additionally, in the transformation $\M^{n-k+1}_n$ the edge $n \rightarrow (n-k)$ is not present.  
\end{itemize}
Figure~\ref{fig:cg62} and \ref{fig:cg63} show the CNOT maps $G^{6,2}$ and $G^{6,3}$ respectively. 

\begin{figure}[!ht]
\begin{center}
\tiny
\begin{tikzpicture}[-latex ,auto ,node distance =1.5 cm and 1.5cm ,on grid ,
semithick ,
state/.style ={ circle ,top color =white , bottom color = white ,
draw,black, text=black , minimum width =0.5 cm}]
\node[state] (A) {$1$};
\node[state] (B)[below right=of A] {${2}$};
\node[state] (C) [below=of B] {${3}$};
\node[state] (D) [below left=of C] {$4$};
\node[state] (E) [above left=of D] {$5$};
\node[state] (F) [above=of E] {${6}$};

\path (A) edge (B);\path (B) edge (A);
\path (B) edge (C);\path (C) edge (B);
\path (C) edge (D);\path (D) edge (C);
\path (D) edge (E);\path (E) edge (D); 
\path (D) edge (F);
\path (C) edge (E);\path (E) edge (C);
\path (D) edge (B);\path (B) edge (D);
\path (A) edge (C);\path (C) edge (A);
\end{tikzpicture}
\end{center}
\caption{The CNOT map $G^{6,2}$}
\label{fig:cg62}
\end{figure}

\begin{figure}[!ht]
\begin{center}
\tiny
\begin{tikzpicture}[-latex ,auto ,node distance =1.5 cm and 1.5cm ,on grid ,
semithick ,
state/.style ={ circle ,top color =white , bottom color = white ,
draw,black, text=black , minimum width =0.5 cm}]
\node[state] (A) {$1$};
\node[state] (B)[below right=of A] {${2}$};
\node[state] (C) [below=of B] {${3}$};
\node[state] (D) [below left=of C] {$4$};
\node[state] (E) [above left=of D] {$5$};
\node[state] (F) [above=of E] {${6}$};

\path (A) edge (B);\path (B) edge (A);
\path (B) edge (C);\path (C) edge (B);
\path (C) edge (D);\path (D) edge (C);
\path (C) edge (F);
\path (C) edge (E);\path (E) edge (C);
\path (B) edge (E);\path (E) edge (B);
\path (D) edge (B);\path (B) edge (D);
\path (D) edge (A);\path (A) edge (D);
\path (A) edge (C);\path (C) edge (A);
\end{tikzpicture}
\end{center}
\caption{The CNOT map $G^{6,3}$}
\label{fig:cg63}
\end{figure}

We now count the number of edges in $G^{n,k}$. 
The number of edges present due to $\M$ transformations 
is $nk-\frac{k(k+1)}{2}-n+1-\frac{(k-1)(k-2)}{2}=nk-n-k^2+k$.
There are further $n-k$ edges due to the $\mu$ transformations. 
Which brings the total number of edges in $G^{n,k}$ to $nk-k^2$.
It is important to note that although the number of edges
in $G^{n,k}$ and $G^{n,n-k}$ are same they are not isomorphic. 
Moreover the Graph $G^{n,i}$ is not a subgraph of $G^{n,i+1}$.

Finally we observe that the circuit $\ckt_{n,k}$ can be modified so that
the number of CNOT gates between the qubits change for certain cases, although
the total number of CNOT gates and the overall CNOT map does not change. 
We call these different instances as different CNOT distributions of $\ckt_{n,k}$.

\subsection*{Different CNOT distributions for $\ckt_{n,k}$}

We know from the description of $\ck_{n,k}$~\cite{dicke} that the number of CNOT gates 
in the three qubit transformation $\M$ is reduced from $6$ to $5$ by canceling the last CNOT
of every transformation by rearranging the first two CNOT gates of the next transformation.
Figure~\ref{fig:mod11} shows the original layout as per the algorithm 
and Figure~\ref{fig:mod12} shows the reduction due to \cite{dicke}.

Now let us consider the last transformation ($k$-th) of each $SCS^{n-i}_k,0 < i<n-k$ block, $\M^{n-i-k+1}_{n-i}$.
This transformation acts on the qubits $n-i-k, n-i-k+1$ and $(n-i)$. This is in fact the first transformation
that affects the qubit $(n-i-k)$ and thus the qubit is in the state $\ket{0}$.
If we do not cancel the last CNOT of the preceding $\M$ transformation ($CNOT^{n-i-k+1}_{n-i}$) 
this then enables us to remove of the CNOT gate  ($CNOT^{n-i-k}_{n-i}$), changing the CNOT distribution 
of the circuit without a change in CNOT map or number of CNOT gates. 
This leads to the implementation shown in Figure~\ref{fig:mod13}.
Since there are $n-k-1$ such transformations, this leads to a total of $2^{n-k-1}$ different CNOT distributions.
However, these modifications do not alter the CNOT map of the circuit due to the fact that there are other
CNOT gates applied between these qubits, which is evident from the circuit description.   
As we have observed in Section~\ref{subsec:1} such different CNOT distributions may lead to different number of
CNOT gates acting erroneously on expectation and thus affect the overall error induced in the circuit. 

\begin{figure}
\centering
\begin{minipage}[b]{.4\textwidth}
\hspace{0.5cm}
\Qcircuit @C=1em @R=1em {
&n-k-i&&&& \qw & \qw      &  \ctrl{2} & \targ     & \qw \\
&n-k-i+1&&&& \qw & \ctrl{1} &  \qw      & \ctrl{-1} & \qw \\
&n-i&&&& \qw & \targ    &  \targ    & \qw       & \qw \\
}
\caption{Initial Implementation}
\label{fig:mod11}
\end{minipage}\qquad

\ \\
\begin{minipage}[b]{.4\textwidth}
\hspace{0.5cm}
\Qcircuit @C=1em @R=1em {
& n-k-i &&&& \qw &\targ      &\ctrl{2} & \qw \\
& n-k-i+1&&&& \qw & \ctrl{-1} &\qw      & \qw \\
& n-i  &&&& \qw &\qw        &\targ    & \qw \\
}
\caption{Modification in $\ck_{n,k}$}
\label{fig:mod12}
\end{minipage}
\ \\
\begin{minipage}[b]{.4\textwidth}
\hspace{0.5cm}
\Qcircuit @C=1em @R=1em {
&n-k-i&&&& \qw & \qw      &  \targ     & \qw \\
&n-k-i+1&&&& \qw & \ctrl{1} &  \ctrl{-1} & \qw \\
&n-i&&&& \qw & \targ    &  \qw       & \qw \\
}
\caption{Alternate Implementation}
\label{fig:mod13}
\end{minipage}
\end{figure}

\section{Conclusion}
\label{sec:6}
In this paper we have explored the domain of optimal circuit implementation in terms of
CNOT and single qubit gates.
In this regard we have concisely realized partially defined unitary transformations
to improve the gate count of the most optimal deterministic Dicke state ($\ket{D^n_k}$) 
preparation circuit ($\ck_{n,k}$). 
We have improved the implementation of one such transformation and have also proven the optimality 
of our implementation. We have further improved the Dicke State preparation 
circuit by removing redundant gates and modifying implementations of certain
partially defined unitary transformations 
depending on the active basis states that that act as input to these transformations. 
We have then shown that these improvements not only reduce the number of CNOT and single qubit gates but
also reduces the architectural constraints of the circuit using the case of $\ket{D^4_2}$.
The resultant circuit is the deterministic Dicke State ($\ket{D^n_k},~2 \leq k \leq n-1$) 
Preparation Circuit with the least number of elementary gates to the best of our knowledge.
We have implemented the circuits $\ck_{4,2}$ and the improved circuit $\ckt_{4,2}$ on the
IBM-Q machine ``ibmqx2'' and observed that the deviation from ideal measurement statistics
is significantly lesser in case of $\ckt_{4,2}$.
Furthermore, we have shown that how different CNOT 
distributions can help a circuit without changing the number of gates or the architectural constraints
by comparing the expected CNOT error of two such distributions against a fairly generalized error model. 
We have concluded by describing the CNOT map of the circuit $\ckt_{n,k}$ 
and observe the exponential number of different CNOT distributions that can be derived by modifying the 
circuit to complete our generalization. 

We observed that even the circuits for $\ket{D^5_2}$ could not be implemented 
in the IBM back end machines without adding further CNOT gates to our description.
This is because of incompatibility of the architecture and circuit CNOT maps.  
Therefore it is of all the more importance to form the circuit for an algorithm in the 
most concise way possible. Against this backdrop we have shown how optimally realizing 
partially defined unitary transformations can lead to better implementation results. 
In conclusion we note down the following optimization problems that will help us implement 
algorithms more efficiently in the current scenario. 

\begin{enumerate}

\item Given a maximally partial unitary transformation what is the corresponding unitary matrix 
that can be decomposed using the least number of elementary gates?

\item Given two circuits corresponding to an algorithm 
with isomorphic CNOT maps and the same number of CNOT gates, but
different CNOT distribution across the qubits, 
which circuit will produce less erroneous outcome?

\end{enumerate}

\end{document}